\newcommand{\figref}[1]{Fig.\,\ref{#1}}
\newcommand{\tabref}[1]{Table~\ref{#1}}
\tikzset{every picture/.style={line width=0.75pt}}
\def\BibTeX{{\rm B\kern-.05em{\sc i\kern-.025em b}\kern-.08em
		T\kern-.1667em\lower.7ex\hbox{E}\kern-.125emX}}
 \newcommand{\light}{\texttt{LightSecAgg}\xspace}
\newcommand{\google}{\texttt{SecAgg}\xspace}
\newcommand{\googlep}{\texttt{SecAgg+}\xspace}
\newcommand{\turbo}{\texttt{TurboAgg}\xspace}
\newcommand{\fast}{\texttt{FastSecAgg}\xspace}
\newcommand{\swift}{\texttt{SwiftAgg+}\xspace}
\newcommand{\ccesa}{\texttt{CCESA}\xspace}
\begin{document}

\title{\swift: Achieving Asymptotically Optimal Communication Loads in Secure Aggregation for Federated Learning}

\author{Tayyebeh Jahani-Nezhad, Mohammad Ali Maddah-Ali,~\IEEEmembership{Senior Member,~IEEE}, Songze Li, Giuseppe Caire,~\IEEEmembership{Fellow,~IEEE,}
\thanks{T.~Jahani-Nezhad and G.~Caire are with the Electrical Engineering and Computer Science Department, Technische Universit\"at Berlin, 10587 Berlin, Germany (e-mail: t.jahani.nezhad@tu-berlin.de, caire@tu-berlin.de) }       
\thanks{T.~Jahani-Nezhad and M.~A.~Maddah-ali are with the Department of Electrical Engineering, Sharif University of Technology, Tehran 11365-11155, Iran (e-mail: maddah\_ali@sharif.edu)}
\thanks{S.~Li is with IoT Thrust, The Hong Kong University of Science and Technology (Guangzhou), and
Department of Computer Science and Engineering, The Hong Kong University of Science and Technology (e-mail:songzeli@ust.hk)}
\thanks{The work of T. Jahani-Nezhad and G. Caire was partially funded by the European Research Council under the ERC Advanced Grant N. 789190, CARENET.}}



\maketitle

\begin{abstract}
We propose \swift,  a novel secure aggregation protocol for federated learning systems, where a central server aggregates local models of $N \in  \mathbb{N}$ distributed users, each of size $L \in \mathbb{N}$, trained on their local data, in a privacy-preserving manner. \swift can significantly reduce the communication overheads without any compromise on security, and achieve optimal communication loads within diminishing gaps.  Specifically, in presence of at most $D=o(N)$ dropout users,  \swift achieves a per-user communication load of $(1+\mathcal{O}(\frac{1}{N}))L$ symbols and a server communication load of $(1+\mathcal{O}(\frac{1}{N}))L$ symbols,  with a worst-case information-theoretic security guarantee, against any subset of up to $T=o(N)$ semi-honest users who may also collude with the curious server. Moreover, the proposed \swift allows for a flexible trade-off between communication loads and the number of active communication links.
In particular, for $T<N-D$ and for any $K\in\mathbb{N}$, \swift can achieve the server communication load of $(1+\frac{T}{K})L$ symbols, and per-user communication load of up to $(1+\frac{T+D}{K})L$ symbols, where the number of pair-wise active connections in the network is $\frac{N}{2}(K+T+D+1)$.  
\end{abstract}

\begin{IEEEkeywords}
Federated learning, Secure aggregation, Secret sharing, Dropout resiliency, Optimal communication load.
\end{IEEEkeywords}

\section{Introduction}
\IEEEPARstart{F}{ederated} learning (FL) is an emerging distributed learning framework that allows a group of distributed users (e.g., mobile devices) to collaboratively train a global model of size $L$ with their local private data, without sharing the data~\cite{mcmahan2017communication,kairouz2019advances,li2020federated}. 
	Specifically, in a FL system with a central server and $N$ users, during each training iteration, the server sends the current state of the global model to the users. Receiving the global model, each user then trains a local model with its local data, and sends the local model to the server. By aggregating the local models, the server can update the global model for the next iteration. 
	While the local datasets are not directly shared with the server, several studies have shown that a curious server can launch model inversion attacks to reveal information about the training data of individual users from their local models (see, e.g.,~\cite{zhu2020deep,geiping2020inverting}). Therefore, the key challenge to protect users' data privacy is to design \emph{secure aggregation} protocols, which allow the aggregation of local models to be computed without revealing each individual model even if the server colludes with a limited number of users, called semi-honest users, in the system. Moreover, as some users may randomly drop out of the aggregation process (due to low batteries or unstable connections), the server should be able to robustly recover the aggregated local models of the surviving users in a privacy-preserving manner. 
	
As such motivated, a secure aggregation protocol \google is proposed in~\cite{bonawitz2017practical}. In \google, each pair of users agree on a shared random vector. In addition, each user generates a private random vector and secret shares it with other users. 
Before sending its local model to the server, each user masks the local model in a particular way. The mask is created based on the private and shared random vectors such that the shared part can be canceled out when aggregated at the server. In order to recover the result and cancel the remaining masks, the private random vector of each surviving user and the shared random vectors of each dropped user must be reconstructed using the collected shares from surviving users in the second round of communication. 

Note that to have an information-theoretic security, the shared and the private random vectors must be chosen uniformly at random from vectors of size $L$. However, to make it practical in \google, it is assumed that the users agree on common seeds for a pseudo-random generator rather than on the entire shared random vectors which is not secure in the context of information theory. In this paper, we focus on information-theoretic security guarantees.





One of the major challenges for \google to scale up is the communication load. The secret sharing among users requires all-to-all communication, which incurs quadratic cost in the number of users $N$. There has been a series of works that aim to improve the communication efficiency of \google~(see, e.g.,~\cite{so2021turbo,bell2020secure,choi2020communication}). In~\cite{so2021turbo}, \turbo is proposed to perform secure aggregation following a circular topology, achieving a communication load of $\mathcal{O}(LN\log N)$ at the server and $\mathcal{O}(L\log N)$ at each user. The idea of \turbo is partitioning the users into several groups each of size $\mathcal{O}(\log N)$ to accelerate model aggregation phase. To protect individual models, each user masks its local model in a particular way.
 Furthermore, the masked local models are encoded using Lagrange coded computing \cite{yu2019lagrange} such that the data of dropped users can be recovered at each stage of the protocol. 
 Each user in a group computes the aggregation of the masked models and the aggregation of coded models of all users in the previous groups, and sends them to all users in the next group, as well as the masked and coded version of its model.  The missing terms of aggregated masked models, caused by the dropped users, can be recovered in the next group if at least half of the users in a group do not drop out. \turbo guarantees the privacy of individual local models  with high probability as long as the number of semi-honest users is less than $N/2$.

 In \cite{bell2020secure}, \googlep is proposed which considers a $k$-regular communication graph among users instead of the complete graph used in \google to reduce the communication and computation loads, where $k=\mathcal{O}(\log N)$. It means that each user secret shares with a subset of other users instead of sharing  with all of them. It is shown that \googlep requires a communication of $\mathcal{O}(LN+ N\log N)$ at the server and $\mathcal{O}(L+ \log N)$ at each user. In~\cite{choi2020communication}, another similar idea, \ccesa, is proposed in which a sparse random graph is used as communication network instead of the complete graph. Thus, secret sharing is used for a subset of users instead of all-to-all communication. While \googlep, \turbo and \ccesa improve the communication efficiency of \google, they only provide probabilistic privacy guarantees as opposed to the worst-case guarantee of \google.

Some other secure aggregation protocols 
focus on reducing the computation complexity of \google, rather than the communication load. \fast is proposed in \cite{kadhe2020fastsecagg} which presents a multi-secret sharing scheme based on a finite field version of the Fast Fourier Transform to achieve a trade-off between the number of secrets, the dropout tolerance and the privacy threshold. \fast reduces the computation load of the server as well as per-user to $\mathcal{O}(L\log N)$, while achieves the same order of communication as \google with lower privacy threshold and dropout tolerance.

In ~\cite{yang2021lightsecagg}, \light is proposed to reduce the computation complexity bottleneck in secure aggregation. In \light, each user independently samples a random mask, encodes it using $T$-private MDS (maximum distance separable) coding, generating secret shares of the mask to send to other users. Then, each user adds mask to its local model and sends the masked model to the server. The server can reconstruct and cancel out the aggregated mask of surviving users via a one-shot decoding, using the aggregated coded masks received in a second phase of communication with surviving users. Similar idea of one-shot reconstruction of aggregated mask was also utilized in \cite{zhao2021information}, which nevertheless requires a trusted third party, and much more randomness needs to be generated, as well as more storage at each user. 

As an earlier and conference version of this manuscript, in \cite{jahani2022swiftagg}
, we propose \texttt{SwiftAgg} which reduces the server communication load via partitioning users into several groups and using Shamir's secret sharing \cite{shamir1979share} to keep the local models private. \texttt{SwiftAgg} is able to achieve correct aggregation in presence of up to $D$ dropout users, with formally-proven the worst-case security guarantee against up to $T$ semi-honest users.  Independently, similar ideas have been used in \texttt{CodedSecAgg} \cite{schlegel2021codedpaddedfl}, but dropout users in groups have not been addressed and there is no formal definition and proof of privacy. The communication loads and the computation loads of the existing approaches are compared in Table~\ref{table}.

In this paper, we propose \swift as a new scheme for secure aggregation in federated learning, which
is robust against up to $D$ user dropouts and achieves worst-case security guarantees against up to $T<N-D$ semi-honest users which may collude with each other as well as with the server to gain some information. In \swift, in order to compute the aggregation of local models in a privacy-preserving manner,
 the number of symbols received by the server is $(1+\mathcal{O}(\frac{1}{N}))L$ as long as $T=o(N)$ and $N-D=\mathcal{O}(N)$. In addition, the number of symbols sent by each user is $(1+\mathcal{O}(\frac{1}{N}))L$, for $T=o(N)$ and $D=o(N)$. 

Moreover, since in federated learning many of the users may not be accessible from each other, the number of active communication links, among users or from users to the server, is also an important factor when designing secure aggregation protocols. \swift allows for a flexible trade-off between the communication loads and the network connections. To achieve this trade-off, we first partition the users into groups of size $T+D+K$ for some $K\in\mathbb{N}$ (See Fig.~\ref{fig3}), and the groups are arranged on an arbitrary hierarchical tree with the server as the root. Then in the first phase, each user partitions its local model into $K$ parts. In the second phase, users within each group secret share the partitions of their local models using ramp secret sharing \cite{blakley1984security}, and aggregate the shares locally. In the third phase, each user aggregates  in-group shares and the messages received from the corresponding users in its children groups and sends the result to the corresponding user in its parent group, which is finally sent to the server. 
In the third phase, if one user in a group drops out, the rest of the users in the sequence that includes the dropped user remain silent. This aggregation tree provides flexibility to adjust the delay of each iteration of training process based on the network connections. Additionally, parameter $K$ controls the trade-off between communication loads and network connectivity which can be chosen based on the constraints of the network.


\swift simultaneously achieves the following advantages compared to the existing works:
	\begin{enumerate}
        \item
         For $T=o(N)$ and $N-D=\mathcal{O}(N)$, the server communication load in \texttt{SwiftAgg+} is within factor 1 of the cut-set lower bound.
         \item \swift has the flexibility to reduce the number of active connections between the users at the cost of increasing the communication load away from the optimum.
         
		\item
    	It is resilient to up to $D$ user dropouts. As opposed to the existing schemes, in \swift, due to the structure of the scheme, the effects of user dropouts are already handled in one round of communication with the server, and there is no need to take place another round of communication.
		\item
		It achieves worst-case information-theoretic security against a curious server and any subset of up to $T < N-D$ semi-honest users.
		\item
		Based on the possible connections among users, \swift have a flexibility to control the delay of each training iteration.
	\end{enumerate}
	
\tabref{table} compares  different schemes for  secure aggregation problem in terms of the communication and computation loads. For a fair comparison, we consider two metrics: the server communication load and the per-user communication load. Server communication load indicates the aggregated size of all messages which are sent or received by the server, and per-user communication load denotes the aggregated size of all messages which are sent by each user. Furthermore, server and per-user computations indicate the computation loads of the server and each user respectively. In this table, the  computational complexity of the schemes in \cite{yang2021lightsecagg,so2021turbo,jahani2022swiftagg}, and \swift,  is calculated based on Reed-Solomon decoding complexity \cite{kedlaya2011fast}. Any improvements in this complexity can be applied to those schemes.

Compared with existing schemes, as shown in \tabref{table}, \swift significantly reduces the server communication load and per-user communication load, and for $T=o(N)$ and $N-D=\mathcal{O}(N)$,
it requires $L(1+\mathcal{O}(\frac{1}{N}))$ symbols for the server communication load and for $T=o(N)$ and $D=o(N)$, it requires $L(1+\mathcal{O}(\frac{1}{N}))$ symbols for the per-user communication load.
Meanwhile, similar as \light, the proposed \swift also enjoys small server computation complexity with worst-cast security guarantees.

	\begin{table*}[t]
		\centering
		\caption{ Communication and computation loads of secure aggregation frameworks in federated learning. Here $N$ is the total number of the users, $L$ is the model size, $T$ is the number of semi-honest users, $D$ is the number of dropouts, $s$ is the length of seeds for pseudo random generator.In this table, the last column indicates the maximum number of semi-honest users and dropouts that the scheme can tolerate.}
		\label{table}
		\resizebox{2\columnwidth}{!}{
		\begin{threeparttable}
			\begin{tabular}{||c |c c c c| c||} 
				\hline
				Approach & Server communication  & Per-user communication  & Server computation & Per-user computation & Threshold\\ [0.5ex] 
				\hline\hline
				\google~\cite{bonawitz2016practical} & $\mathcal{O}(NL+sN^2)$ & $\mathcal{O}(L+sN)$ & $\mathcal{O}(LN^2)$ & $\mathcal{O}(LN+sN^2)$ & $T,D\le \ceil{N/3}-1$ \\
				\hline
				\googlep~\cite{bell2020secure}& $\mathcal{O}(NL+sN\log N)$ & $\mathcal{O}(L+s\log N)$ & $\mathcal{O}(LN\log N + N\log^2N)$ & $\mathcal{O}(L\log N + s\log^2N)$ & $\frac{TN}{N-1}<N-D$ \\
				\hline
				
				\turbo~\cite{so2021turbo} & $\mathcal{O}(NL\log N)$ & $\mathcal{O}(L\log N)$ & $\mathcal{O}(L\log N\log^2{\log N})$ & $\mathcal{O}(L\log N\log^2{\log N})$ & $T,D< N/2$\\ 
				\hline
				\ccesa~\cite{choi2020communication}& $\mathcal{O}(NL+sN\sqrt{N\log N})$ & $\mathcal{O}(L+s\sqrt{N\log N})$ & $\mathcal{O}(LN\log N)$& $\mathcal{O}(L\sqrt{N\log N}+sN\log N)$& \text{Server is only curious}\\
				\hline
				\fast~\cite{kadhe2020fastsecagg} & $\mathcal{O}(NL+N^2)$  & $\mathcal{O}(L+N)$  & $\mathcal{O}(L\log N)$  & $\mathcal{O}(L\log N)$ & $T,D<N/2$\\
				\hline
				\rule{0pt}{12pt}
				\light~\cite{yang2021lightsecagg}& $\mathcal{O}(NL)$ & $(2+\frac{T+D+1}{N-T-D})L$& $\mathcal{O}\big(\frac{N-D-1}{N-T-D}L\log^2 (N-D-1)\big)$ & $\mathcal{O}\big(L(2+\frac{T}{N-T-D}+\frac{N}{N-T-D}\log^2 (N-D-1))\big)$& $T<N-D$\\
				\hline 
				\rule{0pt}{12pt}
		        \texttt{SwiftAgg}\cite{jahani2022swiftagg} & $(1+T){L}$ & $(T+D+1){L}$& $\mathcal{O}\big(TL\log^2(T)\big)$ &  $\mathcal{O}\big( {L}(T+D+1)(1+\log^2(T))\big)$& $T<N-D$\\ 
			
				\hline 
				\rule{0pt}{12pt}
				Proposed \swift & $(1+\frac{T}{N-T-D}){L}$ & $(1+\frac{T+D}{N-T-D}){L}$& $\mathcal{O}\big(\frac{N-D-1}{N-T-D}L\log^2(N-D-1)\big)$ &  $\mathcal{O}\big( {L}(\frac{N}{N-T-D}+\frac{N}{N-T-D}\log^2(N-D-1))\big)$& $T<N-D$\\ 
				\hline
			\end{tabular}
			 \begin{tablenotes}
      \item In this table, in order to have a fair comparison in terms of communication load, the design parameter in \light is chosen such that it minimizes the communication loads.
    \end{tablenotes}
			\end{threeparttable}
			}
	\end{table*}
	
	\noindent {\bf Notation:}  For $n\in\mathbb{N}$ the notation $[n]$ represents set $\{1,\dots,n\}$. The notation $a|b$ means that $a$ divides $b$.
	Furthermore, the cardinality of set $\mathcal{S}$ is denoted by $|\mathcal{S}|$. In addition, we denote the difference of two sets $\mathcal{A}$, $\mathcal{B}$ as $\mathcal{A}\backslash\mathcal{B}$, that means the set of elements which belong to $\mathcal{A}$ but not $\mathcal{B}$. $H(X)$ denotes the entropy of random variable $X$ and $I(X;Y)$ is the mutual information of two random variables $X$ and $Y$.

\section{Problem formulation}
	We consider the secure aggregation problem, for a federated learning system, consisting of a server and $N$ users ${U}_1,\dots,{U}_N$. For each $n \in [N]$, user $n$ has a  private local model of size $L$, denoted by  $\mathbf{W}_n\in\mathbb{F}^{L}$, for some finite field $\mathbb{F}$. Consider that the local models are from some joint distribution 
     $\mathbf{W}_1,\mathbf{W}_2,\dots,\mathbf{W}_N\sim P_{\mathbf{W}_1,\mathbf{W}_2,\dots,\mathbf{W}_N}(\mathbf{W}_1,\mathbf{W}_2,\dots,\mathbf{W}_N)$ which is often unknown.
Each entry of the local models, after a universal linear (affine) mapping, is represented with a non-negative integer number,  less than $\ell$ for some $\ell \in \mathbb{N}$.
 Each user $n$ also has a collection of random variables $\mathcal{Z}_n$, whose elements are selected uniformly at random from $\mathbb{F}^{L}$, and independently from each other and from the local models. 
	Users can send messages to each other and also to the server, using error-free private communication links. 
	$\mathbf{M}^{(L)}_{n\to n'} \in \mathbb{F}^* \cup \{ \perp \}$ denotes  the message that user $n$ sends to user $n'$.  In addition, $\mathbf{X}^{(L)}_n \in \mathbb{F}^* \cup \{ \perp \}$ denotes the message sent by node $n$ to the server.  The null symbol $\perp$ represents the case no message is sent.

	The message $\mathbf{M}^{(L)}_{n\to n'}$ is a function 
	of $\mathbf{W}_n$, $\mathcal{Z}_n$, and the messages that node $n$ has received from other nodes so far. We denote the corresponding encoding function by $f^{(L)}_{n\to n'}$. 
	Similarly, $\mathbf{X}^{(L)}_n$ is a function of $\mathbf{W}_n$, $\mathcal{Z}_n$,  and the messages that node $n$ has received from other nodes so far. We denote the corresponding encoding function by $g^{(L)}_{n}$. For a subset ${\cal S} \subseteq [N]$, we let  $\mathcal{X}_{\mathcal{S}}=\{\mathbf{X}_n^{(L)}\}_{n\in\mathcal{S}}$ represent the set of messages the server receives from users in $\mathcal{S}$.
	We assume that a subset $\mathcal{D}\subset [N]$ of users drop out, i.e.,
	stay silent (or send $\perp$ to other nodes and the server) during the protocol execution. We denote the number of dropped out users as $D=|\mathcal{D}|$.
	
	We also assume that a subset
	$\mathcal{T}\subset[N]$ of the users, whose identities are not known, are semi-honest. It means that users in $\mathcal{T}$ follow the protocol faithfully; however, they are curious and may collude with each other or with the server to gain information about the local models of the honest users. We assume $|\mathcal{T}|\le T$, for some known security parameter $T< N-D$. 
	
	A secure aggregation scheme consists of the encoding functions  $f^{(L)}_{n\to n'}$ and $g^{(L)}_{n}$, $n,n' \in [N]$, such that the following conditions are satisfied: 
	
	
	\textbf{1. Correctness:} The server is  able to recover $\mathbf{W}=\sum_{n\in[N]\backslash\mathcal{D}}{\mathbf{W}}_{n}$,  using 
	$\mathcal{X}_{[N]\backslash\mathcal{D}}~=~\{\mathbf{X}^{(L)}_n\}_{n \in [N]\backslash\mathcal{D}}$. More
	precisely,
	\begin{align}
	H\bigg(\sum\limits_{n\in[N]\backslash\mathcal{D}}\mathbf{W}_n\big| \mathcal{X}_{[N]\backslash\mathcal{D}}\bigg)=0.
	\end{align}
	

	\textbf{2. Privacy Constraint:} 
	For any joint distribution 
 $P_{\mathbf{W}_1,\mathbf{W}_2,\dots,\mathbf{W}_N}(\mathbf{W}_1,\mathbf{W}_2,\dots,\mathbf{W}_N)$, after receiving $\mathcal{X}_{[N]\backslash\mathcal{D}}$ and colluding with semi-honest users in $\mathcal{T}$, the server should not gain any information about local models of the honest users, beyond the aggregation of them, and beyond what it infers from their correlation with the local models of the semi-honest users. Formally, 
	\begin{align}\label{serve-prv}
	\nonumber
	I\bigg(&\mathbf{W}_n, {n\in[N]\backslash\mathcal{T}}; \mathcal{X}_{[N]\backslash\mathcal{D}},\bigcup\limits_{k\in\mathcal{T}}\{\mathbf{M}^{(L)}_{k'\to k}, k'\in[N]\}
	\bigg|\\& \sum\limits_{n\in[N]\backslash\{\mathcal{D}\cup\mathcal{T}\}}{\mathbf{W}_n},\{\mathbf{W}_k,\mathcal{Z}_k,{k\in\mathcal{T}}\}\bigg)=0.
	\end{align}
	
	For a secure aggregation scheme satisfying the above two conditions, we define per-user communication load and server communication load as follows: 
	
	\begin{definition}[Normalized average per-user communication load] denoted by $R^{(L)}_{\text{user}}$,   is defined as the aggregated size of all messages sent by users, normalized by $NL$, i.e.,
		\begin{align*}
		 	R^{(L)}_{\text{user}} =\frac{1}{NL} \sum_{\substack{n\in [N],\\n'\in[N]\backslash n}}\big( H(\mathbf{M}^{(L)}_{n\to n'})+H(\mathbf{X}^{(L)}_{n})\big).  
		\end{align*}
	\end{definition}
	\begin{definition}[Normalized server communication load] denoted by $R^{(L)}_{\text{server}}$,  is defined as the the aggregated size of all messages received by the server, normalized by $L$, i.e.,
	\begin{align*}
	    R^{(L)}_{\text{server}}=\frac{1}{L} \sum_{n\in [N]} H(\mathbf{X}^{(L)}_{n}).
	\end{align*}
	\end{definition}
	We say that the pair of $(R_{\text{server}}, R_{\text{user}})$
	is achievable, if there exist a sequence of secure aggregation schemes with rate tuples $(R^{(L)}_{\text{server}}, R^{(L)}_{\text{user}})$, $L=1,2,\ldots$, such that 
	\begin{align*}
	    R_{\text{server}}=\limsup_{L \rightarrow \infty} R^{(L)}_{\text{server}},\hspace{3mm}
	    R_{\text{user}}=\limsup_{L \rightarrow \infty}R^{(L)}_{\text{user}}.
	\end{align*}
	\begin{definition}[Communication graph for an achievable scheme] For an achievable scheme $\mathscr{A}(N,T,D)$ for secure aggregation problem, consisting of $N$ users and a server, we define the undirected communication graph $\mathscr{G}_{\mathscr{A}}(\mathcal{V},\mathcal{E}_{\mathscr{A}})$. In this graph, $\mathcal{V} =\{ U_1, \ldots, U_N, \textsf{server} \}$ is the set of vertices representing users and the server,  and $\mathcal{E}_{\mathscr{A}}$ is the set of the edges corresponding to active communication links among these vertices. In this graph, edge $e\in\mathcal{E}_{\mathscr{A}}$ between two vertices $v_1,v_2\in\mathcal{V}$ exists if the corresponding users communicate directly with each other (they do not send null message $\perp$). 
	\end{definition}

\section{Main results}
	
	In this section, we present our main results on per-user and server communication loads, achieved by the proposed \swift scheme. Note that  the operations are done in a finite field,  which is large enough to avoid hitting the boundary in the process of aggregation.  We choose a finite field $GF(p)$ denoted by $\mathbb{F}_p$, for some prime number $p$,  where $N(\ell-1) < p \leq 2N (\ell-1)$. 


\begin{theorem} \label{theorem}
		Consider a secure aggregation problem, with $N$ users and one server,  where up to $T$ users are semi-honest and up to $D$ users may drop out. There is an achievable scheme which need
	  \begin{align}
	  \label{upper}
	 &R_{\text{server}}=  \left(1+\frac{T}{N-T-D}\right), \\\nonumber
	  &\ R_{\text{user}}\le \left(1+\frac{T+D}{N-T-D}\right),
	  \end{align}
    symbols from $\mathbb{F}_p$.
	\end{theorem}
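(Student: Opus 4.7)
The plan is to instantiate the \swift protocol with a single group containing all $N$ users, setting the ramp parameter to $K=N-T-D$, and then verify correctness, privacy, and the two communication loads directly. Each user $n\in[N]$ first partitions its local model into $K$ equal-size sub-vectors $\mathbf{W}_n^{(1)},\ldots,\mathbf{W}_n^{(K)}\in\mathbb{F}_p^{L/K}$ (assuming $K\mid L$; otherwise pad with zeros, which is asymptotically negligible). Fix $K+T$ pairwise-distinct interpolation points $\alpha_1,\ldots,\alpha_{K+T}$ and $N$ distinct evaluation points $\beta_1,\ldots,\beta_N$ in $\mathbb{F}_p$, all disjoint from the $\alpha_j$. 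User $n$ samples $T$ independent uniform pads $\mathbf{Z}_n^{(1)},\ldots,\mathbf{Z}_n^{(T)}\in\mathbb{F}_p^{L/K}$ from $\mathcal{Z}_n$ and constructs the unique polynomial $F_n(x)$ of degree at most $K+T-1$ (with vector coefficients in $\mathbb{F}_p^{L/K}$) satisfying $F_n(\alpha_k)=\mathbf{W}_n^{(k)}$ for $k\in[K]$ and $F_n(\alpha_{K+j})=\mathbf{Z}_n^{(j)}$ for $j\in[T]$.

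In the first round, each non-dropped user $n$ transmits $\mathbf{M}^{(L)}_{n\to m}=F_n(\beta_m)$ to every other user $m\neq n$. In the second round, each surviving user $m\in[N]\backslash\mathcal{D}$ computes and sends to the server $\mathbf{X}^{(L)}_m=\sum_{n\in[N]\backslash\mathcal{D}}F_n(\beta_m)$, which is just the sum of the shares it received together with its own $F_m(\beta_m)$. Writing $F(x)\defeq\sum_{n\in[N]\backslash\mathcal{D}}F_n(x)$, a polynomial of degree at most $K+T-1=N-D-1$, the server holds exactly $N-D$ evaluations $\{F(\beta_m)\}_{m\in[N]\backslash\mathcal{D}}$, enough to determine $F$ uniquely via Lagrange interpolation. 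Reading off $F(\alpha_k)=\sum_{n\in[N]\backslash\mathcal{D}}\mathbf{W}_n^{(k)}$ for each $k\in[K]$ and concatenating yields the desired aggregate, so the correctness condition $H\bigl(\sum_{n\in[N]\backslash\mathcal{D}}\mathbf{W}_n\mid\mathcal{X}_{[N]\backslash\mathcal{D}}\bigr)=0$ is satisfied.

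For the privacy condition \eqref{serve-prv}, I would argue that the joint view of the server and any colluding set $\mathcal{T}$ of at most $T$ users is, conditioned on $\{\mathbf{W}_k,\mathcal{Z}_k\}_{k\in\mathcal{T}}$ and $\sum_{n\in[N]\backslash(\mathcal{D}\cup\mathcal{T})}\mathbf{W}_n$, statistically independent of the honest users' local models. The combined view consists of the individual shares $\{F_n(\beta_k)\}_{n\in[N]\backslash\mathcal{T},\,k\in\mathcal{T}}$ and the $N-D$ aggregated server messages. Subtracting the deterministic colluder contributions reduces this view to a linear function of the honest models and their pads. The core step is to recognize that, per honest user $n$, the evaluations $(F_n(\beta_k))_{k\in\mathcal{T}}$ together with $n$'s contribution to the server equations $F(\beta_m)$ arise from a Vandermonde-type system of full column rank $K+T$ in the $K$ model pieces and $T$ pads; conditioning on the aggregate uses up exactly $K$ degrees of freedom in the model pieces across honest users, while the $T$ fresh independent uniform pads of each honest user then act as a one-time pad on the residual colluder view, making it uniform and thus independent of the honest models.

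The rate accounting is immediate. Each user sends $N-1$ shares of size $L/K$ to other users and one aggregated message of size $L/K$ to the server, totaling at most $NL/K$ symbols per user, so $R^{(L)}_{\text{user}}\le N/K=1+(T+D)/(N-T-D)$. The server receives $N-D$ messages of size $L/K$, so $R^{(L)}_{\text{server}}=(N-D)/K=1+T/(N-T-D)$. The main obstacle will be the formal privacy proof: one has to package the $T|\mathcal{T}|$ individual colluder-received shares and the $N-D$ server messages into a single linear system over $\mathbb{F}_p$ and show that, after conditioning on the aggregate, the residual map from honest pads to the view has full column rank, so the pads perfectly mask the honest models regardless of their joint distribution.
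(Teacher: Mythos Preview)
Your proposal is correct and follows essentially the same approach as the paper: you instantiate the one-group case of \swift with $K=N-T-D$, so $\nu=N$, and then read off the two rates exactly as in Section~\ref{proof_comm}. The only cosmetic differences are that you build $F_n$ by Lagrange interpolation through $(\alpha_k,\mathbf{W}_n^{(k)})$ and $(\alpha_{K+j},\mathbf{Z}_n^{(j)})$ rather than placing the data in the coefficients as in~\eqref{Fn} (the paper itself remarks that any $T$-private MDS code works here), and that you outline the privacy argument via a rank/one-time-pad linear-algebra count rather than the chain of mutual-information lemmas in Section~\ref{basic_privacy}; both routes establish the same independence statement for the single-group setting.
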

	
	\begin{proof}
		The proof can be found in Section \ref{proof_comm}.
	\end{proof}
	
		To achieve the communication loads of \eqref{upper} in Theorem \ref{theorem}, we propose \swift, a novel secure aggregation scheme in federated learning.
		In \swift, under the condition of $T=o(N)$ and $D=o(N)$, the number of symbols sent by each user is $(1+\mathcal{O}(\frac{1}{N}))L$ and the number of symbols received by the server is $(1+\mathcal{O}(\frac{1}{N}))L$ in order to compute the aggregation of local models in a privacy-preserving manner. In other words, for large $N$, the communication load per user tends to $L$ symbols from $\mathbb{F}_p$ and the total communication load of the server also tends to $L$ symbols from $\mathbb{F}_p$.
 \begin{theorem}\label{th-server}
     For $T=o(N)$ and $N-D=\mathcal{O}(N)$, in terms of the number of required bits, the server communication load in \texttt{SwiftAgg+} is within factor 1 of the cut-set lower bound.
 \end{theorem}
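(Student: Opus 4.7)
The plan is to combine a cut-set type information-theoretic lower bound on the number of bits the server must receive with the achievable upper bound of Theorem~\ref{theorem}, and then show that their ratio tends to $1$ in the regime $T=o(N)$, $N-D=\Theta(N)$.

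First I would derive the lower bound. The correctness requirement $H(\mathbf{W}\,|\,\mathcal{X}_{[N]\setminus\mathcal{D}})=0$ and subadditivity of entropy give
\begin{align*}
\sum_{n\in[N]\setminus\mathcal{D}} H(\mathbf{X}_n^{(L)}) \;\geq\; H(\mathcal{X}_{[N]\setminus\mathcal{D}}) \;\geq\; H(\mathbf{W}),
\end{align*}
so the total bits the server receives is at least $H(\mathbf{W})$ bits. To make this bound as tight as possible I would exhibit a worst-case joint distribution $P_{\mathbf{W}_1,\ldots,\mathbf{W}_N}$ that nearly saturates the trivial entropy upper bound for $\mathbf{W}$. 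Independence among users does not suffice, because then $\sum_n W_n^{(i)}$ concentrates by a central-limit effect and its entropy is only $\log_2\ell+\tfrac{1}{2}\log_2 N+O(1)$. Instead, I would sample each coordinate of $\mathbf{W}=\sum_n \mathbf{W}_n$ independently and uniformly from $\{0,1,\ldots,N(\ell-1)\}$, and then pick any valid decomposition of that value into $N$ non-negative integer summands each at most $\ell-1$ (e.g., fill coordinates greedily with $\ell-1$ and place the remainder in one further coordinate). Every $s\in\{0,\ldots,N(\ell-1)\}$ admits such a decomposition, so the per-entry box constraint on the marginals holds, and the resulting aggregate entropy is $H(\mathbf{W}) = L\log_2(N(\ell-1)+1)$ bits.

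Next I would express the achievable rate in bits. Each $\mathbb{F}_p$-symbol carries $\log_2 p$ bits, so Theorem~\ref{theorem} gives a server communication load of at most $\bigl(1+\tfrac{T}{N-T-D}\bigr)L\log_2 p$ bits. Using the range $p\leq 2N(\ell-1)$ promised by the construction of the field, $\log_2 p \leq 1+\log_2(N(\ell-1))$, so the ratio of the achievable bound to the cut-set lower bound satisfies
\begin{align*}
\frac{\bigl(1+\tfrac{T}{N-T-D}\bigr)L\log_2 p}{L\log_2(N(\ell-1)+1)} \;\leq\; \Bigl(1+\tfrac{T}{N-T-D}\Bigr)\cdot\frac{1+\log_2(N(\ell-1))}{\log_2(N(\ell-1)+1)}.
\end{align*}
Under $T=o(N)$ and $N-D=\Theta(N)$ the first factor tends to $1$, and since $\log_2(N(\ell-1))\to\infty$ as $N\to\infty$, the second factor also tends to $1$. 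Hence the ratio approaches $1$, which is precisely the claim.

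The main obstacle is the entropy-maximizing construction: one must ensure that $\mathbf{W}=\sum_n\mathbf{W}_n$ can be made essentially uniform on its natural support $\{0,\ldots,N(\ell-1)\}^L$ despite the box constraint on each $\mathbf{W}_n$, since naive independent choices lose a factor of two in the $\log N$ term and would only yield a looser ratio. Once the correlated coupling is in place the rest is routine bookkeeping: bounding $\log_2 p$ by $1+\log_2(N(\ell-1))$ and invoking the asymptotic assumptions on $T$ and $N-D$.
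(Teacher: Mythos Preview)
Your proposal is correct and follows essentially the same route as the paper's own proof: a cut-set lower bound of $L\log_2(N(\ell-1)+1)$ bits via a worst-case joint distribution making the aggregate uniform, combined with the achievable upper bound $(1+\tfrac{T}{N-T-D})L\log_2 p$ and the field-size estimate $p\le 2N(\ell-1)$, after which the ratio is shown to tend to $1$ under the stated asymptotics. Your treatment is in fact more careful than the paper's on one point: the paper simply asserts that the worst case occurs when the aggregate is uniform, whereas you explicitly construct a valid correlated coupling of the $\mathbf{W}_n$ (greedy decomposition of a uniform target sum) that realizes this entropy, and you correctly observe that independent marginals would fail due to concentration.
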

 \begin{proof}
     The proof can be found in Section \ref{proof-th-server}.
 \end{proof}
 \begin{theorem}\label{th-user}
    For local models with uniform distributions, in terms of the number of required bits, the per-user communication load in  \texttt{SwiftAgg+} is within factor $\log_{\ell}{\ell N} $ of the cut-set lower bound, for $T=o(N)$ and $D=o(N)$.
 \end{theorem}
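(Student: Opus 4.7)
The plan is to compare the achievability of Theorem~\ref{theorem}, converted from $\mathbb{F}_p$-symbols to bits, against a cut-set lower bound on the per-user load derived from correctness alone. From Theorem~\ref{theorem}, \swift achieves $R_{\text{user}}\le 1+\tfrac{T+D}{N-T-D}$ in $\mathbb{F}_p$-symbols per coordinate. Since the construction uses a prime $p$ with $N(\ell-1)<p\le 2N(\ell-1)$, each symbol encodes at most $\log_2 p\le \log_2(2\ell N)$ bits, so the total per-user bit load is upper bounded by $\bigl(1+\tfrac{T+D}{N-T-D}\bigr)L\log_2(2\ell N)$. Under $T=o(N)$ and $D=o(N)$, the prefactor is $1+o(1)$, so the per-user bit load is at most $(1+o(1))\,L\log_2(2\ell N)$.

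For the matching lower bound, I will isolate an arbitrary user $U_n$ with the cut separating $U_n$ from the rest of the network (all other users together with the server). Conditioning on $\{\mathbf{W}_{n'}\}_{n'\ne n}$ and on $\{\mathcal{Z}_{n'}\}_{n'\ne n}$, correctness forces the server's reconstruction of $\sum_{n'\in[N]\backslash\mathcal{D}}\mathbf{W}_{n'}$ to become equivalent to reconstructing $\mathbf{W}_n$ itself. The only information about $\mathbf{W}_n$ that is visible on the far side of the cut is carried by the outgoing messages of $U_n$, and a standard data-processing argument (using the independence of $\mathcal{Z}_n$ and of the messages incoming to $U_n$ from $\mathbf{W}_n$) yields $\sum_{n'\ne n}H\bigl(\mathbf{M}^{(L)}_{n\to n'}\bigr)+H\bigl(\mathbf{X}^{(L)}_n\bigr)\ge H(\mathbf{W}_n)$. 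For local models uniform over $\{0,\ldots,\ell-1\}^L$, $H(\mathbf{W}_n)=L\log_2\ell$ bits, and since the same bound holds for every user it also lower-bounds the average per-user communication.

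Combining the two, the ratio of achievable to lower bound is at most $(1+o(1))\log_2(2\ell N)/\log_2\ell=(1+o(1))\log_\ell(2\ell N)$, and since $\log_\ell(2\ell N)=\log_\ell 2+\log_\ell(\ell N)$ with $\log_\ell 2\le 1$, this agrees with the claimed factor $\log_\ell(\ell N)$ asymptotically as $N\to\infty$. The main obstacle I anticipate is the careful bookkeeping in the cut-set step: I need to verify that the shared randomness $\mathcal{Z}_n$ and the messages that $U_n$ relays on behalf of other users do not ``stand in'' for the entropy of $\mathbf{W}_n$. This follows from the joint independence assumptions in the model, but the conditioning must be spelled out so that the messages incoming to $U_n$ contribute no information about $\mathbf{W}_n$. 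The remaining algebraic manipulations and asymptotic simplifications are routine.
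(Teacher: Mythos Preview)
Your proposal is correct and follows essentially the same approach as the paper: the paper likewise bounds the achievable per-user bit load by $(1+\tfrac{T+D}{N-T-D})(\log_2((\ell-1)N)+1)$ via the prime bound on $p$, invokes the single-user cut-set bound $R_{\text{user}}\ge \log_2\ell$ for uniform local models, and then takes the ratio to obtain the $\log_\ell(\ell N)$ factor under $T=o(N)$, $D=o(N)$. Your treatment of the cut-set step (conditioning on the other users' models and randomness, then applying data processing) is in fact more explicit than the paper's, which simply asserts the bound.
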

  \begin{proof}
     The proof can be found in Section \ref{proof-th-user}.
 \end{proof}
 \begin{remark}
      Note that  the case $T = o(N)$  is well justified. Restricted with a limited budget, it often becomes prohibitively difficult for the adversary to hack a constant fraction of the users as the number of users grows. For example, it is seems easier to hack one of only two users rather than 50 out of 100 ones.
 \end{remark}
	As the number of active communication links is important in federated learning, \swift also allows us to reduce the number of active network connections in the secure aggregation problem at the cost of increasing the communication load away from the optimum. The result is presented in the following theorem.
	
	\begin{theorem}
	   Consider a secure aggregation problem, with $N$ users and one server,  where up to $T$ users are semi-honest and up to $D$ users may drop out. For any $K\in \mathbb{N}$, where $(T+D+K)|N$, there exist an achievable scheme ${\mathscr{A}_K}(N,T,D)$ with
		\begin{align}\label{comm_flix}
		R_{\text{server}}^{(\mathscr{A}_K)}&=1+\frac{T}{K},\\\nonumber
		 R_{\text{user}}^{(\mathscr{A}_K)}&\le \left(1+\frac{T+D}{K}\right),
		\end{align}
		symbols from $\mathbb{F}_p$. The communication graph of this achievable scheme $\mathscr{G}_{\mathscr{A}_K}(\mathcal{V},\mathcal{E}_{\mathscr{A}_K})$ consists of $|\mathcal{E}_{\mathscr{A}_K}|=\frac{N}{2}(K+T+D+1)$ edges.
	\end{theorem}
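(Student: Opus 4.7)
The plan is to construct a parameterized version of \swift in which the group size is $T+D+K$ and the block-partition size of each local model is $K$, then verify correctness, privacy, communication loads and edge count. First, I would partition the $N$ users into $G=N/(T+D+K)$ groups of equal size and arrange the groups on an arbitrary rooted tree with the server as the root; inside every group, each of the $T+D+K$ users is given a position index $m\in[T+D+K]$, and position-$m$ users in parent/child groups will play parallel roles. Each user $n$ splits $\mathbf{W}_n$ into $K$ equal blocks $\mathbf{W}_n^{(1)},\dots,\mathbf{W}_n^{(K)}$ of length $L/K$, picks $T$ fresh uniform masks $\mathbf{Z}_n^{(1)},\dots,\mathbf{Z}_n^{(T)}\in\mathbb{F}_p^{L/K}$, forms the unique degree-$(T+K-1)$ polynomial $F_n(x)$ that interpolates the $K$ blocks and $T$ masks at fixed distinct points, and sends the share $F_n(\gamma_m)$ to the position-$m$ user of its group. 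Each user $m$ in group $g$ aggregates the received shares into $S_g^m=\sum_{n\in g}F_n(\gamma_m)$. In Phase~3 the tree is traversed from leaves to the root: position-$m$ user of group $g$ adds the messages from its position-$m$ children to $S_g^m$ and sends the sum to its position-$m$ parent (or to the server, for root-group users); if a position-$m$ user drops, the whole chain of position-$m$ users on the path from that group to the root goes silent.

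Correctness follows from linearity of polynomial evaluation: the server's $m$-th message is $F_{\rm tot}(\gamma_m)$ where $F_{\rm tot}=\sum_{n\notin\mathcal D}F_n$ has degree $T+K-1$. Since at most $D$ position-chains can be silenced, the server receives at least $(T+D+K)-D=T+K$ such evaluations, enough to interpolate $F_{\rm tot}$ and read off $\sum_{n\notin\mathcal D}\mathbf{W}_n^{(k)}=F_{\rm tot}(\alpha_k)$ for $k\in[K]$. For privacy, I would argue that the entire protocol, from the point of view of the server together with any coalition $\mathcal T$ of at most $T$ semi-honest users, reduces to $|\mathcal T|$ evaluations of each $F_n$ (held inside the groups containing those users) plus at most $T+K$ evaluations of $F_{\rm tot}$. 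Inside each group, ramp secret sharing of degree $T+K-1$ with $T$ uniform mask-points guarantees that any $T$ shares are jointly independent of the $K$ data blocks, so intra-group colluders learn nothing about honest individual models beyond the group aggregate. The cross-group messages are themselves evaluations of the aggregate polynomial of the subtree, hence carry no information beyond partial sums that are already determined by the total sum and the models of users in $\mathcal T$; a standard counting of independent linear equations in the joint view shows that any residual uncertainty is absorbed by the $T$ random mask evaluations, yielding the conditional mutual-information identity required by the privacy condition.

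For the communication loads, every user sends one share of size $L/K$ to each of the other $T+D+K-1$ group members in Phase~2 and at most one aggregated share of size $L/K$ in Phase~3, giving a per-user load of at most $(T+D+K)\cdot(L/K)=L(1+(T+D)/K)$; since non-dropped users in silenced chains send only Phase-2 shares, averaging preserves the bound. For the server, the root-group users that remain active send shares of size $L/K$; the scheme needs exactly $T+K$ of them to decode, so one can let the surplus root users remain silent, giving $R_{\text{server}}^{(\mathscr A_K)}=(T+K)/K=1+T/K$. For the edge count, the intra-group complete graphs contribute $G\binom{T+D+K}{2}=\frac{N(T+D+K-1)}{2}$ edges, while the tree of groups contributes $T+D+K$ edges per parent--child pair (one per position index) and the root group contributes $T+D+K$ edges to the server, for a total of $(G-1)(T+D+K)+(T+D+K)=N$ cross/server edges. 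Summing gives $|\mathcal E_{\mathscr A_K}|=\tfrac{N}{2}(T+D+K-1)+N=\tfrac{N}{2}(K+T+D+1)$, matching the claim.

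The main obstacle is the privacy step: since information flows up the tree and shares are re-aggregated at every level, one must track what a coalition that spans several groups can jointly infer, and verify that the $T$ mask coordinates inside each $F_n$ plus the $T$ mask coordinates of $F_{\rm tot}$ are enough to hide all partial sums aside from the global one. I would handle this by an information-theoretic argument expressing the coalition's view as a full-rank linear map with enough uniformly random inputs to guarantee that, conditioned on $\{\mathbf W_k,\mathcal Z_k\}_{k\in\mathcal T}$ and the global aggregate, the honest blocks $\mathbf W_n^{(k)}$ are independent of the view; the remaining items (correctness, load, edge count) are then mechanical verifications as outlined above.
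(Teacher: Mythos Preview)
Your construction is the paper's \swift scheme: same grouping into blocks of size $T+D+K$, same degree-$(T+K-1)$ ramp sharing, same position-wise tree aggregation and silencing rule; your interpolation-style encoding (blocks as values at fixed points rather than as coefficients) is an inessential variant, and your correctness, load, and edge-count computations match the paper's.

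One correction on the privacy sketch: your claim that subtree partial sums ``are already determined by the total sum and the models of users in $\mathcal T$'' is false---a partial sum of honest models over a proper subtree is \emph{not} a function of the global aggregate, so a semi-honest user at an internal tree node does receive a share encoding genuinely new information. What actually hides these partial sums is that every group along the path contains at least one honest, non-dropped user whose fresh uniform masks are folded into the outgoing aggregate, making the noise component at each tree level independent of the noise one level below; this is precisely the lemma the paper isolates (their Lemma~1). Your closing linear-algebra counting is the right mechanism, but you should drop the incorrect intermediate claim and argue directly from this per-group fresh-randomness property that the coalition's joint view---intra-group shares at the colluders' positions, subtree-aggregate shares received by colluders from their children, and the server's evaluations of $F_{\rm tot}$---is independent of the honest models given the global sum.
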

	\begin{proof}
		The proof can be found in Section \ref{proof_comm}.
	\end{proof}
    \begin{remark}
	In \eqref{comm_flix} by increasing parameter $K$, both per-user communication load and  server communication load are reduced, while the number of active connections in the achievable scheme $|\mathcal{E}_{\mathscr{A}_K}|$ is increased and vice versa. This allows us to have a flexible trade-off between communication loads and the complexity of the communication pattern (in terms of number of active links). 
	\end{remark}

    To achieve the trade-off between communication loads and network connections $|\mathcal{E}_{\mathscr{A}_K}|$, \swift partitions the users into some disjoint groups, each of size $K+D+T$ users labeled as user $1$ to user $K+D+T$ of that group. The $\frac{N}{K+D+T}$ groups are arranged in an arbitrary tree with the server as the root. The model aggregation starts from the leaves of the tree, and proceeds in three main phases: Phase 1: Each user partitions the vector of its local model of length $L$ into $K$ sub-vectors of length $\frac{L}{K}$, Phase 2:
	Each user uses a ramp sharing~\cite{blakley1984security} to share the sub-vectors of its model to other users within its group,  Phase 3: Each user aggregates the shares that it receives in Phase 2 and also adds it to the messages received from the corresponding users in its children groups and sends the result to the corresponding user in its parent group. These messages are finally reach to the server and are used to recover the aggregated local models. 
	

\begin{remark}
The minimum communication loads in Theorem \ref{theorem} are achieved by setting $K$ to its maximum value of $N-D-T$. The minimum number of communication connections is achieved by setting $K=1$, which reduces to the \texttt{SwiftAgg} scheme in \cite{jahani2022swiftagg}.
\end{remark}

\begin{remark}
While in problem formulation we let $L\to\infty$, the proposed scheme works for finite values of $L$, where $K|L$. If $K\nmid L$, we can zero-pad the local models.
\end{remark}

\section{Illustrative Examples}

In this section, we present two illustrative examples of the proposed \swift scheme to solve a secure aggregation problem, under different choices of the design parameter $K$. This also demonstrates the trade-off between communication loads and the number of communication links achieved by varying $K$ in \swift.

We consider a secure aggregation problem over a federated learning system of one server and $N=12$ users, $U_1, U_2, \dots, U_{12}$. At most $D=1$ user may dropout during the aggregation process, and up to $T=2$ semi-honest users may collude with each other to gain some information about the local models of other users. Each user $n$ has a local model $\mathbf{W}_n$ which is a vector of size $L$,	

\subsection{Miminum communication loads with $K=9$}


Each user $n$ samples uniformly at random two vectors $\mathcal{Z}_n=\{\mathbf{Z}_{n,1},\mathbf{Z}_{n,2} \}$ from $\mathbb{F}^{\frac{L}{9}}$, and then takes the following steps:
	\begin{enumerate}
	\item {\bf Partitioning the Local Models: }
		User $n$ partitions its local model into $K=9$ parts, i.e.,
		\begin{align*}
		    \mathbf{W}_n=[\mathbf{W}_{n,1},\mathbf{W}_{n,2},\dots,\mathbf{W}_{n,9}]^T,
		\end{align*}
		where each part $\mathbf{W}_{n,k}, k\in[9]$ is a vector of size $\frac{L}{9}$.
		\item {\bf Secret Sharing and Aggregation:}
		User $n$ forms the following polynomial.
		\begin{align*}
		\mathbf{F}_{n}(x)=\mathbf{W}_{n,1}+\mathbf{W}_{n,2}x+\dots+\mathbf{W}_{n,9}x^8+\mathbf{Z}_{n,1}x^9+\mathbf{Z}_{n,2}x^{10},
		\end{align*}
		where the coefficients of the first $K=9$ terms are the partitions of the local model of user $n$, $n\in[12]$. 
		
		 Let $\alpha_t\in \mathbb{F}$, $t\in [12]$, be distinct non-zero constants. We assign   $\alpha_t$ to user $t$.  In this step, each user $n$ sends the evaluation of its polynomial function at $\alpha_{t}$, i.e.,  $\mathbf{F}_{n}(\alpha_{t})$, to user $t$, for $t\in [12]$. \figref{complete_graph} represents the connection links among the users and server. As shown in \figref{complete_graph}, $\mathscr{G}(\mathcal{V},\mathcal{E})$ for this example is a complete graph which means all users communicate with each other and to the server.
		If a user $m$ drops out and stays silent, $\mathbf{F}_{m}(\alpha_{t})$ is just presumed to be zero.
		
		Each user $n$ calculates 
		$\mathbf{S}_{n}=\sum_{n'=1}^{12}\mathbf{F}_{n'}(\alpha_{n})$. In this example, assume that $U_3$  drops out and does not send its share to other users. Other users treat its share as zero. In this phase, at most 132 communications take place, each of size $\frac{L}{9}$. 
		
		
		\item 	{\bf Communication with the Server:}
		User $n$ sends $\mathbf{S}_{n}$ to the server, for $n\in [12]$.  
		Clearly in this example, user 3 remains silent and sends nothing (or null message $\perp$) to the server. 
		
		\item {\bf Recovering the result:}
		Let us define 
		\begin{align*}
		\mathbf{F}(x)\hspace{-1mm}\triangleq\hspace{-1mm}\sum_{\substack{n=1 \\ n\neq 3}}^{12}\mathbf{F}_{n}(x)\hspace{-1mm}=\hspace{-1mm}
		\sum\limits_{k=1}^9x^{k-1}\hspace{-1mm}\sum_{\substack{n=1 \\ n\neq 3}}^{12}\mathbf{W}_{n,k}\hspace{-1mm}+\hspace{-1mm}x^9\hspace{-1mm}\sum_{\substack{n=1 \\ n\neq 3}}^{12}\mathbf{Z}_{n,1}\hspace{-1mm}+\hspace{-1mm}x^{10}\hspace{-1mm}\sum_{\substack{n=1 \\ n\neq 3}}^{12}\mathbf{Z}_{n,2}.
		\end{align*}
		One can verify that  $\mathbf{S}_{n}$, for $n\in[12]\backslash\{3\}$ that are received by the server are indeed equal to $\mathbf{F}(\alpha_1), \mathbf{F}(\alpha_2), \mathbf{F}(\alpha_4)$, \dots, $\mathbf{F}(\alpha_{12})$.
		
		Since $\mathbf{F}(x)$ is a polynomial function of degree 10, based on Lagrange interpolation rule the server can recover all the coefficients of this polynomial using the outcomes of users which are not dropped. In particular, the server can recover $\sum_{\substack{n=1 \\ n\neq 3}}^{12}\mathbf{W}_{n}$ using the coefficients of $x^k, k\in[0:8]$ in the recovered polynomial function. Thus, the server is able to recover the aggregation of local models of surviving users and the correctness constraint is satisfied.
	\end{enumerate} 
	In this example, the per-user communication load is $\frac{4}{3}L$ and the server communication load is $\frac{11}{9}L$. In addition, the communication graph is a complete graph with 78 edges and 13 vertices, where because of user dropout no communication occurs on 12 edges.
	
	\subsection{Trading off communication loads for less connections}
	
	Now we consider a scenario in which not all user can communicate with each other. \swift allows for sacrificing the communication load to reduce the number of required communication links.
	In this case, we take $K=3$, and each user locally samples two vectors $\mathcal{Z}_n=\{\tilde{\mathbf{Z}}_{n,1},\tilde{\mathbf{Z}}_{n,2} \}$ uniformly at random from $\mathbb{F}^{\frac{L}{3}}$.
	To reduce the number of connections, each user takes the following steps:
	\begin{enumerate}
	\item {\bf Partitioning the Local Models:}
		User $n$ partitions its local model into $K=3$ parts, i.e.,
		\begin{align*}
		    \mathbf{W}_n=[\mathbf{W}_{n,1},\mathbf{W}_{n,2},\mathbf{W}_{n,3}]^T,
		\end{align*}
		where each part $\mathbf{W}_{n,k}, k\in[3]$ is a vector of size $\frac{L}{3}$.
		
		\item {\bf Grouping:}
		The set of users are arbitrarily partitioned into $\Gamma=2$ groups of size $\nu\triangleq K+D+T=6$, denoted by $\mathcal{G}_1, \mathcal{G}_2$.  Figure \ref{example2}  represents one example of this partitioning, where $\mathcal{G}_1=\{U_1, U_2, U_3, U_4,U_5,U_6\}$, and $\mathcal{G}_2=\{U_7, U_8, U_9, U_{10},U_{11}, U_{12}\}$. We also order the users in each group arbitrarily. For simplicity of exposition, we may refer to user $n$ based on its location in a group of users. If user $n$ is the $t$th user in group $\gamma$, we call it as user $(\gamma, t)$.  For example in \figref{example2}, user 9 is the same as user $(2,3)$.
		We use indices $n$ or $(\gamma, t)$ interchangeably. 
		
		\item {\bf Intra-group Secret Sharing and Aggregation:}
		User $n$ forms the following polynomial.
		\begin{align*}
		\mathbf{F}_{n}(x)=\mathbf{W}_{n,1}+\mathbf{W}_{n,2}x+\mathbf{W}_{n,3}x^2+\tilde{\mathbf{Z}}_{n,1}x^3+\tilde{\mathbf{Z}}_{n,2}x^4.
		\end{align*}

		Let $\alpha_t\in \mathbb{F}$, $t\in [6]$, be six distinct non-zero constants. We assign   $\alpha_t$ to user $t$ of all groups, i.e., users $(\gamma,t)$, $\gamma=1,2$. 
		
		In this step, each user $(\gamma,t)$  sends the evaluation of its polynomial function at $\alpha_{t'}$, i.e.,  $\mathbf{F}_{(\gamma,t)}(\alpha_{t'})$, to user $(\gamma,t')$, for $t'\in [6]$. For example, in \figref{example2}, user $(2,1 )$, which is indeed $U_7$, sends
		$\mathbf{F}_{7}(\alpha_{1})$, 
		$\mathbf{F}_{7}(\alpha_{2})$, 
		$\mathbf{F}_{7}(\alpha_{3})$,
		$\mathbf{F}_{7}(\alpha_{4})$, 
		$\mathbf{F}_{7}(\alpha_{5})$,
		$\mathbf{F}_{7}(\alpha_{6})$,
		to user $(2,1)$ (or user $U_7$ which is basically itself), user $(2,2)$ (or user $U_8$), user $(2,3)$ (user $U_9$), user $(2,4)$ (user $U_{10}$), user $(2,5)$ (user $U_{11}$), and user $(2,6)$ (user $U_{12}$), respectively. 
		If a user $(\gamma,t)$ drops out and stays silent, $\mathbf{F}_{(\gamma,t)}(\alpha_{t'})$ is just presumed to be zero.
		
		The $t$th user, $t\in[6]$, in $\mathcal{G}_1$  calculates 
		\begin{align*}
		\mathbf{Q}_{(1,t)}=&\mathbf{F}_{(1, 1)}(\alpha_{t})+\mathbf{F}_{(1, 2)}(\alpha_{t})
		+\mathbf{F}_{(1, 3)}(\alpha_{t})+\mathbf{F}_{(1, 4)}(\alpha_{t})\\
		&+\mathbf{F}_{(1, 5)}(\alpha_{t})+\mathbf{F}_{(1, 6)}(\alpha_{t}),
		\end{align*}
		and the $t$th user in $\mathcal{G}_2$ calculates
			\begin{align*}
		\mathbf{Q}_{(2,t)}=&\mathbf{F}_{(2, 1)}(\alpha_{t})+\mathbf{F}_{(2, 2)}(\alpha_{t})
		+\mathbf{F}_{(2, 3)}(\alpha_{t})+\mathbf{F}_{(2, 4)}(\alpha_{t})\\
		&+\mathbf{F}_{(2, 5)}(\alpha_{t})+\mathbf{F}_{(2, 6)}(\alpha_{t}).
		\end{align*}
		In this example, assume that $U_3$ or user $(1,3)$ drops out and does not send its share to other users in the first group. Other users within the group treat its share as zero. In this phase, within each group, at most 30 communication take place, each of size $\frac{L}{3}$. 
		\item  {\bf Inter-group Communication and Aggregation:} In this phase, user $(1,t)$, $t\in [6]$, calculates message
		$\mathbf{S}_{(1,t)}~=~\mathbf{Q}_{(1,t)}$,  and it to user $(2,t)$. 
		
		User $(2,t)$, $t\in[6]$, calculates message $\mathbf{S}_{(2,t)}=\mathbf{S}_{(1,t)}+\mathbf{Q}_{(2,t)}$, upon receiving $\mathbf{S}_{(1,t)}$. If user $(2,t)$ does not receive $\mathbf{S}_{(1,t)}$, it also remains silent for the rest of the protocol. In this particular example that user 3 drops out, it sends no message to user 9, and thus user 9 also remains silent.
		\item 	{\bf Communication with the Server:}
		User $t$ of the last group, i.e., user $(2, t)$ sends $\mathbf{S}_{(2,t)}$ to the server, for $t\in [6]$.  
		Clearly in this example, user 9 remains silent and sends nothing (or null message $\perp$) to the server. 
		
		\item {\bf Recovering the result:}
		Let us define 
		\begin{align*}
		\mathbf{F}(x)\triangleq&\sum_{\substack{n=1 \\ n\neq 3}}^{12}\mathbf{F}_{n}(x)=\sum_{\substack{n=1 \\ n\neq 3}}^{12}\mathbf{W}_{n,1}+x\sum_{\substack{n=1 \\ n\neq 3}}^{12}\mathbf{W}_{n,2}+x^2\sum_{\substack{n=1 \\ n\neq 3}}^{12}\mathbf{W}_{n,3}\\
		&+x^3\sum_{\substack{n=1 \\ n\neq 3}}^{12}\tilde{\mathbf{Z}}_{n,1}+x^4\sum_{\substack{n=1 \\ n\neq 3}}^{12}\tilde{\mathbf{Z}}_{n,2}.
		\end{align*}
		One can verify that  $\mathbf{S}_{(2,t)}$, for $t=1,2,4,5,6$ that are received by the server are indeed equal to $\mathbf{F}(\alpha_t)$, for $t=1,2,4,5,6$.
		
		Since $\mathbf{F}(x)$ is a polynomial function of degree 4, based on Lagrange interpolation rule the server can recover all the coefficients of this polynomial using the outcomes of 5 users which are not dropped. Thus, the server is able to recover the aggregation of local models of surviving users, i.e., $\sum_{\substack{n=1 \\ n\neq 3}}^{12}\mathbf{W}_{n}$ and the correctness constraint is satisfied.
	\end{enumerate}
	In this example, the per-user communication load is $2L$ and the server communication load is $\frac{5}{3}L$. In addition, the communication graph has 42 edges with 13 vertices, where because of user dropouts no communication occurs on 7 edges.
 As compared to Example 1, the communication load is increased while the the number of connections among users are decreased. The number of connections can be minimized by choosing $\nu=T+D+1=4$ and having three groups, where no partitioning of local models is performed. 
 
	
	\begin{figure}
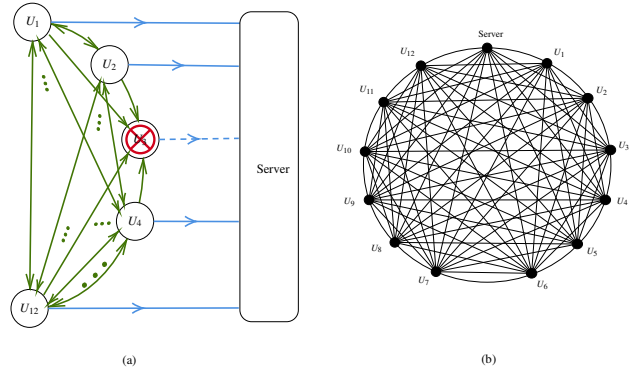


		\centering

		\tikzset{every picture/.style={line width=0.75pt}} 
		\scalebox{0.4}{

\tikzset{every picture/.style={line width=0.75pt}} 


			}
		\caption{(a) Example~1: Minimum communication load and maximum connections for a system consisting of users $U_1,U_2,\dots,U_{12}$, where $T=2$ users are semi-honest and $D=1$ user may drop out. In this example, \swift consists of two main phases: (1) Secret sharing and aggregation, shown by the green directed lines; (2) Communication with the server, shown by the blue directed lines. Dashed lines indicate no communication occurs in this direction. In this example, each user partitions its local model into $N-T-D=9$ parts. (b) Communication graph $\mathscr{G}(\mathcal{V},\mathcal{E})$ of this example.}
		\label{complete_graph}
	\end{figure}
		\begin{figure}
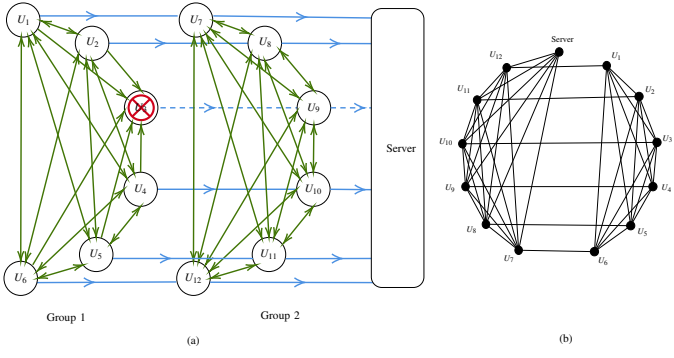


		\centering

		\tikzset{every picture/.style={line width=0.75pt}} 
		\scalebox{0.36}{

\tikzset{every picture/.style={line width=0.75pt}} 


			}
		\caption{(a) Example~2: Trading off communication loads for less connections. Each user partitions its local model into $K=3$ parts. The users are partitioned into two groups of size  $K+T+D=6$. In this example, \swift consists of three main phases: (1) Intra-group Secret sharing, shown by the green directed lines; (2) Inter-group Communication, and (3) Communication with the server shown by the blue directed lines. Dashed lines indicate no communication occurs in this direction. (b) Communication graph $\mathscr{G}(\mathcal{V},\mathcal{E})$ of this example.}
		\label{example2}
	\end{figure}
	\section{The Proposed \swift Scheme}
	In this section, we formally describe the proposed \swift, which introduces a trade-off between the communication loads and the number of active communication links among users and the server.  
	Consider a network consisting of one server and $N$ users, $U_1, U_2,\dots, U_N$, where up to $T$ of them are semi-honest which may collude with each other or with the server to gain some information about other users' local models. Furthermore, up to $D$ users may drop out, and their indices are denoted by $\mathcal{D}$.
  The operations need to be done in a finite field,  which is large enough to avoid hitting the boundary in the process of aggregation.  We choose a finite field $GF(p)$ denoted by $\mathbb{F}_p$, for some prime number $p$.
 Each user $n$ has a local model $\mathbf{W}_n\in \mathbb{F}_p^{L}$ and a set of random variables $\mathcal{Z}_n=\{\mathbf{Z}_{n,j},j\in[T] \}$ whose elements are chosen independently and uniformly at random from $\mathbb{F}_p^{\frac{L}{K}}$, for some parameter $K \in\mathbb{N}$ such that $(T+D+K)|N$.
	The server
	wants to recover the aggregated local models of the  surviving users, i.e., $\mathbf{W}=\sum_{n\in[N]\backslash\mathcal{D}}\mathbf{W}_{n}$, while the individual models remain private from semi-honest users and the server. 	To reach this goal,  \swift takes the following steps. 

	 {\bf 1) Grouping:} 	The set of $N$ users are arbitrarily partitioned into $\Gamma$ groups, each of size  $ \nu\triangleq D+T+K$, denoted by $\mathcal{G}_1, \mathcal{G}_2, \dots \mathcal{G}_{\Gamma}$. In each group, the users are labeled as user 1 to user $K+D+T$.  For simplicity, we refer to user $n$ based on its location in a group of users. 
	 Without loss of generality, we let $\gamma = \lfloor n/ \nu \rfloor$+1, and $t = n \mod \nu +1$, and place user $n$ on the $t$th location in group $\gamma$. Alternatively, we label user $n$ using $(\gamma, t)$.
	
	{\bf 2) Arranging the Groups:}  The groups are arranged on an arbitrary tree, which is called the \emph{aggregation tree} and denoted by $\mathscr{T}(\hat{\mathcal{V}},\hat{\mathcal{E}})$. The aggregation tree represents the flow of aggregation.
		 In the aggregation tree, $\hat{\mathcal{V}}$ is  a set of vertices representing groups and the server, $\hat{\mathcal{V}}=\{\mathcal{G}_1, \mathcal{G}_2, \dots, \mathcal{G}_{\Gamma},\text{Server}\}$ and $\hat{\mathcal{E}}$ is the set of edges representing the active connections between groups. In the aggregation tree, the server is the root and it has only one child which is the last group, $\mathcal{G}_{\Gamma}$. 
		 		For example, five different aggregation trees for an example with seven groups of users, $\mathcal{G}_1, \mathcal{G}_2,\dots,\mathcal{G}_7$ are shown in \figref{agg_tree}.

			
			For group $\mathcal{G}_{\gamma}$, located in one vertex of this tree, let $\mathcal{G}_{\gamma^+}$ represent the parent group of $\mathcal{G}_{\gamma}$, and  $\mathcal{F}_{(\gamma,\text{child})}$ be the set of indices of its children groups. In a rooted tree, a descendant of group $\gamma$ is any group $\gamma_1$ whose path from the root contains group $\gamma$, and group $\gamma_2$ is an ancestor of group $\gamma$ if and only if group ${\gamma}$ is a descendant of group ${\gamma_2}$. 
			Let us define $\mathcal{F}_{(\gamma,\text{desc})}$ as the set of indices of descendant groups of group $\gamma$, and $\mathcal{F}_{(\gamma,\text{anc})}$ as the set of indices of ancestor groups of group $\gamma$ (excluding the server).  For example, in \figref{agg_tree}~(d), $\mathcal{G}_{5^+}=\mathcal{G}_7$ is the parent group of $\mathcal{G}_5$ and $\mathcal{F}_{5,\text{child}}=\{1,2,3\}$. In addition, $\mathcal{G}_7$ has set $\mathcal{F}_{7,\text{desc}}=\{1,2,3,4,5,6\}$ and $\mathcal{G}_1$  has set $\mathcal{F}_{1,\text{anc}}=\{5,7\}$.

	{\bf 3) Partitioning the Local Models:}
	User $n$ partitions its local model into $K$ parts, denoted by 
	\begin{align}
	    \mathbf{W}_n=[\mathbf{W}_{n,1},\mathbf{W}_{n,2},\dots, \mathbf{W}_{n,K}],
	\end{align}
	where each part $\mathbf{W}_{n,k}, k\in[K]$ is a vector of size of $\frac{L}{K}$. 
	
	{\bf 4) Intra-Group Secret Sharing and Aggregation:}
	User $n\in[N]$ forms the following polynomial.
	\begin{align}\label{Fn}
	\mathbf{F}_n(x)=\sum\limits_{k=1}^{K}\mathbf{W}_{n,k}x^{k-1}+\sum\limits_{j=1}^{T}\mathbf{Z}_{n,j}x^{K+j-1}.
	\end{align}
		This polynomial function is designed such that the coefficient of  $x^k$ is the $(k+1)$th partition of the local model, for $k\in[0:K-1]$. Each user $n$ uses its polynomial $\mathbf{F}_n(.)$ to share its local model with other users.
		
		
		Let $\alpha_t\in \mathbb{F}_p$, for $t\in [\nu]$, be $\nu$ distinct non-zero constants. We assign   $\alpha_t$ to the $t$th user of all groups, i.e., users $(\gamma,t)$, $\gamma=1,\ldots, \Gamma$. 
		
		Within each group $\gamma$, each user $(\gamma,t)$  sends the evaluation of its polynomial function at $\alpha_{t'}$, i.e.,  $\mathbf{F}_{(\gamma,t)}(\alpha_{t'})$, to user $(\gamma,t')$, for $t'\in [\nu]$. 
		If a user $(\gamma,t)$ drops out and stays silent, $\mathbf{F}_{(\gamma,t)}(\alpha_{t'})$ is just presumed to be zero.
		
		Each user $(\gamma,t)$ calculates 
		\begin{align}
		\mathbf{Q}_{(\gamma,t)} = \sum_{t' \in [\nu]}  \mathbf{F}_{(\gamma, t')}(\alpha_{t}).
		\end{align}
		
		Note that in this phase, within each group, at most $\nu(\nu-1)$ communication take place, each of size $\frac{L}{K}$.

	
	\begin{figure}
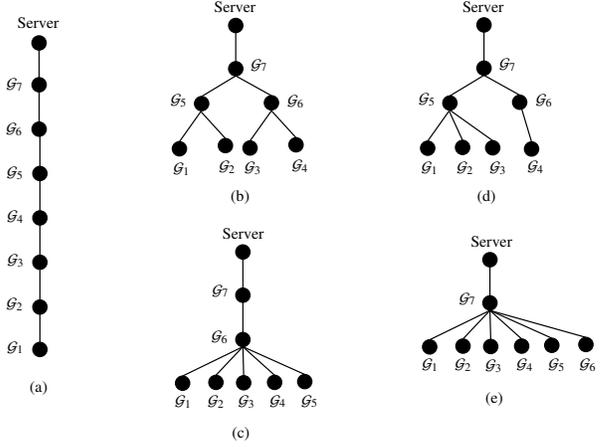


		\centering

		\tikzset{every picture/.style={line width=0.75pt}} 
		\scalebox{0.6}{

\tikzset{every picture/.style={line width=0.75pt}} 


			}
		\caption{Five examples for the aggregation tree for seven groups of users, $\mathcal{G}_1, \mathcal{G}_2, \dots,\mathcal{G}_7$, in \swift.  }
		\label{agg_tree}
	\end{figure}
		
		  {\bf 5) Inter-group Communication and Aggregation:}
		In this phase, user $t$ of group $\gamma$ calculates a message denoted by $\mathbf{S}_{(\gamma,t)}$ and  sends it to user $t$ of its parent group $\gamma^+$, for $\gamma\in[\Gamma-1]$. 
		
		
		Particularly, if user $(\gamma,t)$, $\gamma\in[\Gamma-1]$ does not have any children groups, i.e., $\mathcal{F}_{(\gamma,\text{child})}=\emptyset$, it sets
		\begin{align}\label{recursive1}
		   \mathbf{S}_{(\gamma,t)}=\mathbf{Q}_{(\gamma,t)}, 
		\end{align}
	    and sends $\mathbf{S}_{(\gamma,t)}$ to user $(\gamma^+,t)$.
		
		If $\mathcal{F}_{(\gamma,\text{child})}\ne\emptyset$, user $(\gamma,t)$ 
		calculates  $\mathbf{S}_{(\gamma,t)}$ as 
		\begin{align}\label{recursive2}
		\mathbf{S}_{(\gamma,t)}  = \mathbf{Q}_{(\gamma,t)}+ \sum\limits_{\gamma^-\in\mathcal{F}_{(\gamma,\text{child})}}\mathbf{S}_{(\gamma^-,t)},
		\end{align}
		upon receiving $\{\mathbf{S}_{(\gamma^-,t)},\gamma^-\in\mathcal{F}_{(\gamma,\text{child})}\}$. If user $(\gamma,t)$ does not receive $\mathbf{S}_{(\gamma^-,t)}$ from at least one group in $\mathcal{F}_{(\gamma,\text{child})}$, it also remains silent for the rest of the protocol. In this phase, at most $\nu(\Gamma-1)$ messages are communicated, each of size $\frac{L}{K}$.
		 Figure \ref{fig3} demonstrates the intra-group secret sharing and inter-group communication in a sequential aggregation tree.

			{\bf 6) Communication with the Server:}
		User $t$ of the last group, i.e., user $(\Gamma, t)$
		computes
		\begin{align}\label{recursive3}
		\mathbf{S}_{(\Gamma,t)}  =  \mathbf{Q}_{(\Gamma,t)}+ \sum\limits_{\gamma^-\in\mathcal{F}_{(\Gamma,\text{child})}}\mathbf{S}_{(\gamma^-,t)},
		\end{align}
		and sends it to the server, for $t\in [\nu]$.  
		
		 {\bf 7) Recovering the Result:}
		Having received the messages from a subset of users in $\mathcal{G}_{\Gamma}$ of size at least $T+K$, the server can recover the aggregated local models.
	
	 \begin{remark}
    In \eqref{Fn}, we use ramp sharing \cite{blakley1984security} to keep individual local models private, but any other arbitrary $T$-private MDS code can be used for $\mathbf{F}_n(x)$.  Ramp sharing allows us to reduce the size of shares and consequently achieve the cut-set outer bound within a diminishing gap.
	 \end{remark}
	 
	 \begin{remark}
The formation of the groups and the topology of the aggregation tree in \swift may be constrained by the actual connectivity in the network. 
	 \end{remark}
	\begin{figure*}
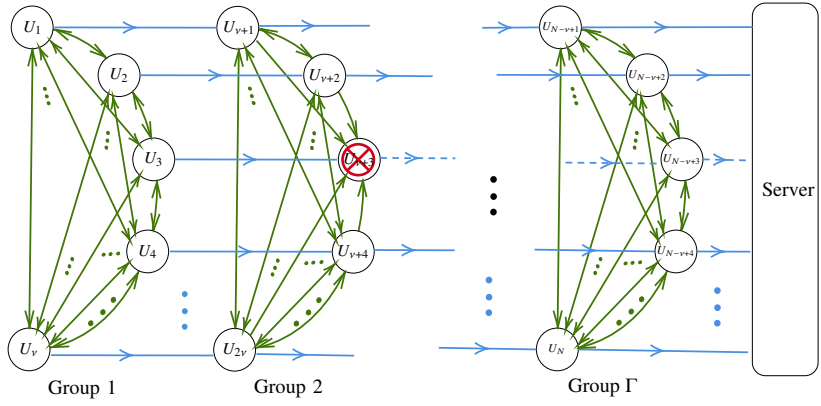

		\centering

		\tikzset{every picture/.style={line width=0.75pt}} 
		\scalebox{0.45}{

\tikzset{every picture/.style={line width=0.75pt}} 

	

		}
		\caption{An overview of the proposed setting in \swift with a sequential aggregation tree, where $\nu=T+D+K$. The intra-group and inter-group communication links are shown in green and blue, respectively.}
		\label{fig3}
	\end{figure*}

Next, we demonstrate the correctness of \swift in aggregating user models, by showing that the server can recover $\sum_{n\in[N]\backslash\mathcal{D}}\mathbf{W}_{n}$ from the messages received from group $\Gamma$. 
	Using the recursive \cref{recursive1,recursive2,recursive3}, $\mathbf{S}_{(\gamma,t)}(\alpha_t)$ is either a null message, or it is equal to
	\begin{align}\label{S_gamma}
	\mathbf{S}_{(\gamma,t)}(\alpha_t)=&\sum\limits_{k=1}^{K}{\alpha_t}^{k-1}\sum\limits_{\gamma'\in \mathcal{F}_{(\gamma,\text{desc})}}\sum\limits_{n\in \mathcal{G}_{\gamma'} \backslash \mathcal{D} }\hspace{-3mm}\mathbf{W}_{n,k}\\\nonumber
	&+\sum\limits_{j=1}^{T}{\alpha_t}^{K+j-1}\sum\limits_{\gamma'\in\mathcal{F}_{(\gamma,\text{desc})}}\sum\limits_{{n\in \mathcal{G}_{\gamma'} \backslash \mathcal{D} }
	}\mathbf{Z}_{n,j},
	\end{align}
	Thus if user $t$ in group $\Gamma$ sends a message to the server, it is equal to 	$\mathbf{S}_{(\Gamma,t)}(\alpha_t)$. From \eqref{S_gamma}, it is easy to see that $\mathbf{S}_{(\Gamma,t)}(\alpha_t)=\mathbf{F}(\alpha_t)$, where 
	\begin{align}\label{F}
	\mathbf{F}(x)=\sum\limits_{k=1}^{K}x^{k-1}\hspace{-2mm}\sum\limits_{n\in[N]\backslash\mathcal{D}}\hspace{-1mm}\mathbf{W}_{n,k}+ \sum\limits_{j=1}^{T}x^{K+j-1}\hspace{-2mm}\sum\limits_{n\in[N]\backslash\mathcal{D}}\hspace{-1mm}\mathbf{Z}_{n,j}.
	\end{align}
	$\mathbf{F}(x)$ is a polynomial of degree $K+T-1$.  Thus if the server receives at least $T+K$ messages from the last group, it can use Lagrange interpolation to recover  $\mathbf{F}(x)$ and
	\begin{align*}
	   \sum_{n\in[N]\backslash\mathcal{D}}\mathbf{W}_{n}=\bigg[\hspace{-1mm}\sum_{n\in[N]\backslash\mathcal{D}}\hspace{-1mm}\mathbf{W}_{n,1}, \hspace{-1mm}\sum_{n\in[N]\backslash\mathcal{D}}\hspace{-1mm}\mathbf{W}_{n,2},\dots,\hspace{-1mm}\sum_{n\in[N]\backslash\mathcal{D}}\hspace{-1mm}\mathbf{W}_{n,K}\bigg]^T.
	\end{align*}
	
	Recall that, in \swift, for any user $(\gamma,t)$ in $\mathcal{D}$, all messages $\mathbf{S}_{(\gamma',t)}(\alpha_t)$, $\gamma'\in\{\gamma,\mathcal{F}_{(\gamma,\text{anc})}\}$ is null. In particular, for any user $(\gamma,t)$ in $\mathcal{D}$,  $\mathbf{S}_{(\Gamma,t)}(\alpha_t)$ is null. Thus at most $D$ users in the last group send null messages to the server. Since the size of each group is $D+T+K$, the server receives at least $T+K$ values  $\mathbf{S}_{(\Gamma,t)}(\alpha_t)$ for distinct $\alpha_t$, and thus can correctly recover $\mathbf{F}(x)$. 
	
\section{Theoretical Analysis}	In this section, we analyze the communication and computation complexities of the proposed \swift scheme, and provide an information-theoretic proof on its privacy guarantee in protecting local model information from colluding users and the server.
	
	\subsection{Communication Loads and Number of Active Connections}\label{proof_comm}
	According to \eqref{F}, the total number of symbols that are needed to be received by the server is $T+K$, each of size $\frac{L}{K}$. Thus, the normalized server communication load in \swift is $	 R_{\text{server}}^{(L)}=1+\frac{T}{K}.$
	
 In each group, at most $\nu(\nu-1)$ symbols are sent by the members, each of size $\frac{L}{K}$, and there are $\frac{N}{\nu}$ groups. In addition, at most $\nu$ symbols are sent from a child group to a parent group. Thus, the normalized per-user communication load in \swift is upper-bounded as $R_{\text{user}}^{(L)}\le(1+\frac{T+D}{K})$, and the number of active communication links in the communication graph of \swift, i.e., $|\mathcal{E}_{\swift}|$, is upper bounded by $\frac{N}{\nu}\left(\frac{\nu(\nu-1)}{2}+\nu\right)=\frac{N}{2}(K+T+D+1)$.
 

It is clear that through adjusting the value of $K$ between $1$ and $N-D-T$, we can have a trade-off between the communication loads and the number of active connections in the network. One choice of interest is $K=N-D-T$ as the largest value for $K$, which leads to Theorem~\ref{theorem} and minimizes the communication loads. It is also possible to minimize the number of network connections by choosing $K=1$, as proposed in \cite{jahani2022swiftagg}.

While the communication loads and the number of active communication links are determined by the value of $K$ using \swift, the actual communication delay of the model aggregation also depends on the topology of the adopted aggregation tree.
	 Assume that each communication link between two groups has maximum delay $\delta_{\text{Inter}}$. In the case of more than one group, the  total delay of each training iteration, $\Delta$, can be minimized if
$\mathcal{F}_{(\Gamma,\text{child})}=\{\gamma_1,\gamma_2,\dots,\Gamma-1\}$.
Thus, in a synchronous system $\Delta=2\delta_{\text{Inter}}+\delta_{\text{Intra}}$, where $\delta_{\text{Intra}}$ is the delay for intra-group communications. This case is shown for an example consisting of 7 groups in \figref{agg_tree}(e). In contrast, the maximum delay occurs when the groups are located sequentially in the aggregation tree which is shown for the example in \figref{agg_tree}(a), and $\Delta=\Gamma\delta_{\text{Inter}}+\delta_{\text{Intra}}$.

	\subsection{Computation Loads}
	In this subsection, we analyze the computation loads at the server and at each user with respect to the maximum number semi-honest users, $T$, maximum number of dropouts, $D$, and size of the local models, $L$, and the design parameter $1\le K\le N-T-D$ that is the number of sub-vectors each local model is broken into.
	
	{\bf Computational Complexity at the Server:} In \swift, to recover the aggregation of local models, the server needs to receive $T+K$ outcomes of the users in the last group, each of size $\frac{L}{K}$. Therefore, the computation performed by the server includes interpolation of a polynomial of degree $T+K-1$, where the coefficients are vectors of size $\frac{L}{K}$. The complexity of interpolation of a polynomial of degree $d$ is $\mathcal{O}(d\log^2d)$, when the field supports FFT \cite{kedlaya2011fast}. Thus, the computational complexity at the server is $\mathcal{O}\big((1+\frac{T-1}{K})L\log^2(K+T-1)\big)$.
	
	{\bf Computational Complexity at the User:} In \swift, each user performs the following two operations:
	\begin{enumerate}
	    \item It evaluates one polynomial in $\nu=K+T+D$ distinct values. If the field supports FFT, evaluation of a
polynomial function of degree $d$ has a computational complexity of $\mathcal{O}(\log^2d)$ \cite{kedlaya2011fast}.
	    In \swift, we need to compute one polynomial of degree $T+K-1$, where the coefficients are vectors of size $\frac{L}{K}$, at $\nu$ points. Thus, this step requires a complexity of $\mathcal{O}\big((T+K+D)\frac{L}{K}\log^2{(T+K-1)}\big)$.
	    \item On average each user computes the summation of $\nu+1$ vectors of size $\frac{L}{K}$ which has a complexity of $\mathcal{O}\big((T+K+D) \frac{L}{K} \big)$. 
	\end{enumerate}
	Therefore, the computation load on each user is $\mathcal{O}\big( (1+\frac{T+D}{K}){L}(1+\log^2(T+K-1))\big)$.
\subsection{Proof of Theorem \ref{th-server}}\label{proof-th-server}
In \swift, we choose a finite field $\mathbb{F}_p$, for some prime number $p$, where $N(\ell-1) < p \leq 2N (\ell-1)$. According to Bertrand's postulate \cite{aigner1999proofs} such prime number exists for $N(\ell-1) \ge 1$. In terms of number of required bits, the server communication load achieved by \swift is as follows.
\begin{align}\label{server-swift}
    R_{\text{server}} &\leq (1+\frac{T}{N-T-D})\lceil \log_2(p) \rceil \\\nonumber
    &\leq (1+\frac{T}{N-T-D}) \bigg( \log_2\big((\ell-1) N\big) +1\bigg).
\end{align}
\textbf{Cut-set lower bound:} We know that the local models are from some joint distribution 
     $\mathbf{W}_1,\mathbf{W}_2,\dots,\mathbf{W}_N\sim P_{\mathbf{W}_1,\mathbf{W}_2,\dots,\mathbf{W}_N}(\mathbf{W}_1,\mathbf{W}_2,\dots,\mathbf{W}_N)$. However,  neither the users nor the server know about this joint distribution.   On the other hand, due to the privacy conditions, the users cannot, implicitly or explicitly,  learn the join distribution and adapt their transmission schemes to the joint distribution of the local models. 
     For any joint distribution, the server should be able to recover the aggregation of local models using the received messages from the users. For the cut separating the users from the server,  the worst joint distribution  occurs when the distribution of the aggregation of local models on the server becomes uniform.  Therefore, in terms of the number of required bits, the cut-set lower bound for the server communication load is derived as follows.
      \begin{align}\label{server-cut}
              R_{\text{server}}^{(L)}\ge \log_2\big({(\ell-1) N}+1\big).
          \end{align}
          Comparing \eqref{server-cut} with \eqref{server-swift} shows that for \texttt{SwiftAgg+} we have
\begin{align*}
    \frac{R_{\text{server}}^{(\text{Achievable})}}{R_{\text{server}}^{(\text{Lower})}} &\leq (1+ \frac{T}{N-T-D})\frac{\log_2((\ell-1)N)}{\log_2((\ell-1)N+1)} \\
    &+(1 +\frac{T}{N-T-D})\frac{ 1}{\log_2{\big((\ell-1) N+1\big)}}.
\end{align*}
Therefore, if $T=o(N)$ and $N-D=\mathcal{O}(N)$ we have
\begin{align*}
    \frac{R_{\text{server}}^{(\text{Achievable})}}{R_{\text{server}}^{(\text{Lower})}}-1    &\approx \mathcal{O}\bigg(\frac{ 1}{\log_2{\big((\ell-1) N\big)}}\bigg),
\end{align*}
where the server communication load in \texttt{SwiftAgg+} is within factor 1 of the cut-set lower bound.
\subsection{Proof of Theorem \ref{th-user}}\label{proof-th-user}
In terms of number of required bits, the per-user communication load achieved by \swift is as follows.
\begin{align}\label{user-swift}
    R_{\text{user}}&\le (1+\frac{T+D}{N-T-D})\lceil \log_2(p) \rceil\\ \nonumber
    &\leq (1+\frac{T+D}{N-T-D})\bigg( \log_2\big((\ell-1) N\big) +1\bigg).
\end{align}
\textbf{Cut-set lower bound:} For uniform and independent local models, and for the cut separating each user from the rest of the network, 
the cut-set lower bound is  as follows.
\begin{align}\label{user-cut}
              R_{\text{user}}^{(L)}\ge \log_2{\ell}.
\end{align}
Comparing \eqref{user-cut} with \eqref{user-swift} shows that if $T=o(N)$ and $D=o(N)$ we have 
\begin{align*}
      \frac{R_{\text{user}}^{(\text{Achievable})}}{R_{\text{user}}^{(\text{Lower})}}- \log_{\ell}{\ell N}   &\approx \mathcal{O}\bigg(\frac{\log_{\ell}{N}}{N}\bigg),
 \end{align*}
 where the per-user communication load in  \texttt{SwiftAgg+} is within factor $\log_{\ell}{\ell N} $ of the cut-set lower bound.
	\subsection{Proof of Privacy}\label{basic_privacy}

In this section, we prove that \swift satisfies the privacy constraint in \eqref{serve-prv}. The privacy must be guaranteed even if the server colludes with any set $\mathcal{T}\subset [N]$ of at most $T$ semi-honest users which can distribute arbitrary across the groups. 
At a high level, we expand the mutual information in \eqref{serve-prv} over the groups containing the semi-honest users, from the leaf to the root of the aggregation tree, and show that model privacy will be preserved at each expansion step.

In \swift, the \emph{Intra-group Secret Sharing} is inspired by ramp secret sharing scheme \cite{blakley1984security}. Ramp secret sharing scheme is proposed to reduce the size of shares in Shamir's secret sharing \cite{shamir1979share}. It proposes a trade-off between security and size of the shares. In ramp sharing scheme with $T$-privacy, 
no information can be leaked from any $T$ or less shares. This directly leads to the following corollary for \swift.

	\begin{corollary}\label{coro4}
		Assume that user $U_n$ is denoted by $(\gamma,t)$.  In \swift, the local model of $U_n$, is shared using polynomial function $\mathbf{F}_n(x)$ in \eqref{Fn}. In other words, $\mathbf{F}_{(\gamma,t)}(\alpha_{t'})$ for $t'\in[\nu]\backslash \{t\}$ are delivered to user $(\gamma,t')$.
		According to \eqref{Fn} and directly from the privacy guarantee in ramp sharing, we have
		$I(\mathbf{W}_{n};\{ \mathbf{F}_n(\alpha_{t'}).t'\in\mathcal{T}\})=0.$
	\end{corollary}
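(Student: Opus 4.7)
The plan is to derive the corollary as a direct consequence of the standard $T$-privacy property of the ramp secret sharing scheme \cite{blakley1984security}, applied to the specific polynomial $\mathbf{F}_n(x)$ defined in \eqref{Fn}. Without loss of generality assume $|\mathcal{T}| \le T$, since otherwise the claim is vacuous in the $T$-security regime of \swift.

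First I would stack the revealed evaluations $\{\mathbf{F}_n(\alpha_{t'})\}_{t' \in \mathcal{T}}$ into a matrix $\mathbf{Y}$ of dimension $|\mathcal{T}| \times (L/K)$. By the definition of $\mathbf{F}_n(x)$, this matrix can be expressed as the linear transformation
\begin{align*}
\mathbf{Y} \;=\; \mathbf{V}_{\mathcal{T}}^{(\mathbf{W})}\,\mathbf{W}_n^{\text{mat}} \;+\; \mathbf{V}_{\mathcal{T}}^{(\mathbf{Z})}\,\mathbf{Z}_n^{\text{mat}},
\end{align*}
where $\mathbf{W}_n^{\text{mat}}$ stacks the partitions $\mathbf{W}_{n,1},\ldots,\mathbf{W}_{n,K}$ as rows, $\mathbf{Z}_n^{\text{mat}}$ stacks the masks $\mathbf{Z}_{n,1},\ldots,\mathbf{Z}_{n,T}$ as rows, and $\mathbf{V}_{\mathcal{T}}^{(\mathbf{W})}$, $\mathbf{V}_{\mathcal{T}}^{(\mathbf{Z})}$ are the submatrices of the Vandermonde matrix indexed by $\mathcal{T}$ corresponding to the columns $\{0,\ldots,K-1\}$ and $\{K,\ldots,K+T-1\}$ respectively.

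The key step is to observe that $\mathbf{V}_{\mathcal{T}}^{(\mathbf{Z})}$ has full row rank. Indeed, each of its rows equals $\alpha_{t'}^{K}\bigl[1,\alpha_{t'},\ldots,\alpha_{t'}^{T-1}\bigr]$, so it factors as $\diag(\alpha_{t'}^K)_{t'\in\mathcal{T}}$ times a standard Vandermonde block with $|\mathcal{T}|\le T$ distinct non-zero nodes; both factors have full row rank over $\mathbb{F}_p$. Choosing any $|\mathcal{T}|$ columns of $\mathbf{V}_{\mathcal{T}}^{(\mathbf{Z})}$ that form an invertible square submatrix, and conditioning on $\mathbf{W}_n$ together with the remaining $T-|\mathcal{T}|$ masks, the mapping from the selected masks to $\mathbf{Y}$ becomes an affine bijection on $\mathbb{F}_p^{|\mathcal{T}|\cdot L/K}$. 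Because the selected masks are uniform and independent of $\mathbf{W}_n$, the conditional distribution of $\mathbf{Y}$ given $\mathbf{W}_n$ is uniform and in particular does not depend on $\mathbf{W}_n$; hence $I(\mathbf{W}_n;\mathbf{Y})=0$.

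The only real subtlety — and thus the main obstacle I would be careful with — is bookkeeping the vector dimension: each coefficient of $\mathbf{F}_n(x)$ is a vector in $\mathbb{F}_p^{L/K}$, so the argument must be carried out entry-wise (or equivalently, viewed as a matrix equation as above). Once this is set up correctly, invertibility of the Vandermonde block on the mask columns absorbs all information that $\mathbf{Y}$ could carry about $\mathbf{W}_n$, and the claim follows immediately.
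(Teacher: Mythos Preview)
Your proposal is correct and essentially matches the paper's approach: the paper states the corollary as an immediate consequence of the $T$-privacy of ramp secret sharing with no further argument, and your Vandermonde full-row-rank computation is precisely the standard unpacking of that privacy guarantee. Your version is simply more explicit than the paper's one-line invocation.
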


Let the random part of inter-group message $\mathbf{S}_{(\gamma,t)}$ of user $(\gamma,t)$, consisting of the random noises of non-dropped and honest descendant, be denoted by $\tilde{\mathbf{Z}}_{(\gamma,t)}$, i.e.,
	\begin{align}\label{z_tild}
	    \tilde{\mathbf{Z}}_{(\gamma,t)}\triangleq\sum_{j=1}^{T}{\alpha_t}^{K+j-1}\sum_{\gamma'\in\{\gamma,\mathcal{F}_{(\gamma,\text{desc})}\}}\sum_{{n\in \mathcal{G}_{\gamma'} \backslash \{\mathcal{D}\cup \mathcal{T}\} }}\mathbf{Z}_{n,j}.
	\end{align}
	

We show in the following lemma, that the random noise in the message sent from a child to its parent is independent of the random noise in the message sent from that parent to the ancestor group. 

	\begin{lemma}\label{lemma1}
For all $\gamma\in[\Gamma]$ and $t\in[\nu]$, and group $\gamma^+$ as the parent group of group $\gamma$, we have 
 $   I\big(\tilde{\mathbf{Z}}_{(\gamma,t)};\tilde{\mathbf{Z}}_{(\gamma^+,t)}\big)=0.$
\end{lemma}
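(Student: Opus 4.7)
The plan is to write $\tilde{\mathbf{Z}}_{(\gamma^+,t)}$ as $\tilde{\mathbf{Z}}_{(\gamma,t)}$ plus a ``fresh noise'' term that (i) depends only on noise vectors disjoint from those in $\tilde{\mathbf{Z}}_{(\gamma,t)}$ and (ii) is uniformly distributed on $\mathbb{F}_p^{L/K}$. A one-time-pad argument will then force $\tilde{\mathbf{Z}}_{(\gamma^+,t)}$ to be uniform and independent of $\tilde{\mathbf{Z}}_{(\gamma,t)}$, which gives $I(\tilde{\mathbf{Z}}_{(\gamma,t)};\tilde{\mathbf{Z}}_{(\gamma^+,t)})=0$.

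First, I would observe that because $\gamma^+$ is the parent of $\gamma$, the index set $\{\gamma\}\cup\mathcal{F}_{(\gamma,\text{desc})}$ is contained in $\{\gamma^+\}\cup\mathcal{F}_{(\gamma^+,\text{desc})}$, and the complement $\mathcal{B}$ of ``new'' group indices contains $\gamma^+$ itself. Substituting \eqref{z_tild} then yields
\begin{align*}
\tilde{\mathbf{Z}}_{(\gamma^+,t)} - \tilde{\mathbf{Z}}_{(\gamma,t)} = \sum_{j=1}^{T}\alpha_t^{K+j-1}\sum_{\gamma'\in\mathcal{B}}\sum_{n\in\mathcal{G}_{\gamma'}\setminus(\mathcal{D}\cup\mathcal{T})}\mathbf{Z}_{n,j},
\end{align*}
so the fresh noise involves only noise vectors of non-dropped honest users sitting outside the subtree rooted at $\gamma$, which are disjoint from those entering $\tilde{\mathbf{Z}}_{(\gamma,t)}$.

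Second, I would isolate a single uniform summand inside the fresh noise. Since each group has $\nu=T+D+K$ members while at most $T+D$ users in total are semi-honest or dropped, group $\gamma^+$ contains at least $K\ge 1$ non-dropped honest users. For any such user $n^{*}\in\mathcal{G}_{\gamma^+}\setminus(\mathcal{D}\cup\mathcal{T})$ and any $j\in[T]$, the vector $\alpha_t^{K+j-1}\mathbf{Z}_{n^{*},j}$ appears in the fresh noise with nonzero scalar coefficient (recall $\alpha_t\neq 0$ in $\mathbb{F}_p$). Because $\mathbf{Z}_{n^{*},j}$ is uniform on $\mathbb{F}_p^{L/K}$ and independent of every other noise vector (and hence of $\tilde{\mathbf{Z}}_{(\gamma,t)}$), the one-time-pad property makes the fresh noise uniform on $\mathbb{F}_p^{L/K}$ and independent of $\tilde{\mathbf{Z}}_{(\gamma,t)}$.

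Finally, since $\tilde{\mathbf{Z}}_{(\gamma^+,t)}$ equals $\tilde{\mathbf{Z}}_{(\gamma,t)}$ plus an independent uniform random vector on the abelian group $\mathbb{F}_p^{L/K}$, it is uniform conditional on $\tilde{\mathbf{Z}}_{(\gamma,t)}$, hence independent of $\tilde{\mathbf{Z}}_{(\gamma,t)}$, which is equivalent to the claimed vanishing of mutual information. The main obstacle I anticipate is the second step: one must track precisely which noise variables of honest, non-dropped users actually survive the subtraction, and then rely on the group-size condition $\nu\ge T+D+1$ (equivalently $K\ge 1$) to guarantee that at least one fresh uniform summand is present to mask $\tilde{\mathbf{Z}}_{(\gamma,t)}$; if this failed, $\mathcal{B}$ could contribute no new randomness and the independence argument would break down.
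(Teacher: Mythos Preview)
Your proposal is correct and follows essentially the same approach as the paper's proof: decompose $\tilde{\mathbf{Z}}_{(\gamma^+,t)}$ as $\tilde{\mathbf{Z}}_{(\gamma,t)}$ plus fresh noise contributed by groups outside the subtree rooted at $\gamma$, observe that the parent group $\gamma^+$ itself always contributes at least one honest non-dropped user's uniform noise vector (the paper simply asserts $\mathcal{G}_{\gamma^+}\setminus(\mathcal{D}\cup\mathcal{T})\neq\emptyset$, whereas you justify it via $\nu\ge T+D+1$), and conclude via a one-time-pad argument. Your write-up is more explicit about disjointness of the noise variables and about why a single fresh uniform summand suffices, but the underlying idea is the same.
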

	
\begin{proof}
	     Let us define
	     \begin{align}\label{z'}
	         {\mathbf{Z}'}_{n}^{(t)}\triangleq\sum_{j=1}^{T}\mathbf{Z}_{n,j}\alpha_{t}^{K+j-1},
	     \end{align}
	     for $n\in[N]$ and $t\in[\nu]$. In each group, there are $\nu=T+D+K$ users each of which uses $T$ random vectors chosen uniformly and independently from $\mathbb{F}_p^{\frac{L}{K}}$ in its shares. In addition, we have
	   $ \tilde{\mathbf{Z}}_{(\gamma^+,t)}=\sum_{\gamma\in\mathcal{F}_{(\gamma^+,\text{child})}}\tilde{\mathbf{Z}}_{(\gamma,t)}+\sum_{i\in\mathcal{G}_{\gamma^+}\backslash\{\mathcal{D}\cup \mathcal{T}\}}{\mathbf{Z}'}_i^{(t)}$. Since $\mathcal{G}_{\gamma^+}\backslash\{\mathcal{D}\cup \mathcal{T}\} \neq \emptyset$, and 
	    the random vectors $\{{\mathbf{Z}'}_i^{(t)}\}_{i\in\mathcal{G}_{\gamma^+}\backslash\{\mathcal{D}\cup \mathcal{T}\}}$ are i.i.d., we have that $ I\big(\tilde{\mathbf{Z}}_{(\gamma,t)};\tilde{\mathbf{Z}}_{(\gamma^+,t)}\big)=0$ for $\gamma\in\mathcal{F}_{(\gamma^+,\text{desc})}$.
	\end{proof}



	Assume that the semi-honest users are denoted by $\tilde{U}_1, \tilde{U}_2,\dots,\tilde{U}_T$. We denote the indices of theses semi-honest users as $(\gamma_1,t_1),(\gamma_2,t_2),\dots,(\gamma_T,t_T)$ respectively.
	 We also denote the set of indices of honest users in group $\gamma$, $\gamma\in[\Gamma]$, by $\mathcal{H}_{\gamma}\triangleq\{n:U_n\in\mathcal{G}_{\gamma}\backslash\{\mathcal{T}\cup \mathcal{D}\}\}$.

	    Let us define the set of messages which are received by $\tilde{U}_{i}$ by $\mathcal{M}_{\tilde{U}_{i}}$ which consists of two kinds of messages. Particularly, $\mathcal{M}_{\tilde{U}_{i}}=\big\{\{\mathbf{F}_{(n,t_i)},n\in\mathcal{H}_{\gamma_i}\},\{\mathbf{S}_{(\gamma'_i,t_i)},\gamma'_i\in\mathcal{F}_{(\gamma_i,\text{child})}\}  \big\}$,
	    where
 	    $ \{\mathbf{F}_{(n,t_i)},n\in\mathcal{H}_{\gamma_i}\}
 	    =\{\mathbf{W}_n+{\mathbf{Z}'}_{n}^{(t_i)}, n\in\mathcal{H}_{\gamma_i}\}$,
 	is a set of intra-group messages, and $ \mathbf{S}_{(\gamma'_i,t_i)}=\sum_{\gamma'\in\{\gamma_i',\mathcal{F}_{(\gamma_i',\text{desc})}\}}\sum_{m\in \mathcal{H}_{\gamma'}} \mathbf{W}_m+\tilde{\mathbf{Z}}_{(\gamma'_i,t_i)},$
 	  is the message received from the child group $\gamma_i'$. 
 	  
 According to the ramp secret sharing scheme, definitions in \eqref{z_tild},\eqref{z'},  and the fact that the random vectors are chosen uniformly and independently at random from $\mathbb{F}_p^{\frac{L}{K}}$, we can easily prove the following lemmas.
 
\begin{lemma}\label{lemma2}
For each user $(\gamma_i,t_i)$ in \swift we have $I\big( \{{\tilde{\mathbf{Z}}}_{(\gamma'_i,t_i)},\gamma'_i\in\mathcal{F}_{(\gamma_i,\text{child})}\};\{{\mathbf{Z}'}_n^{(t_i)},n\in\mathcal{H}_{\gamma_i}\} \big)=0$, for all $t_i\in[\nu]$ and $\gamma_i\in[\Gamma]$.
\end{lemma}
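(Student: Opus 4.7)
The plan is to unfold both random quantities into explicit $\mathbb{F}_p$-linear combinations of the elementary masks $\{\mathbf{Z}_{n,j}\}_{n\in[N], j\in[T]}$, identify the set of user indices whose masks appear in each combination, and then exploit disjointness of these index sets together with the independence assumption on the $\mathbf{Z}_{n,j}$.

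First, from the definition in \eqref{z_tild}, for any child $\gamma'_i\in\mathcal{F}_{(\gamma_i,\text{child})}$, the quantity $\tilde{\mathbf{Z}}_{(\gamma'_i,t_i)}$ is a deterministic linear combination of masks $\mathbf{Z}_{m,j}$ indexed by users $m$ lying in the set
\begin{align*}
\mathcal{A}_{\gamma'_i} \;\triangleq\; \bigcup_{\gamma''\in\{\gamma'_i\}\cup\mathcal{F}_{(\gamma'_i,\text{desc})}} \big(\mathcal{G}_{\gamma''}\setminus\{\mathcal{D}\cup\mathcal{T}\}\big).
\end{align*}
Analogously, from \eqref{z'}, ${\mathbf{Z}'}_n^{(t_i)}$ for $n\in\mathcal{H}_{\gamma_i}$ only depends on masks $\mathbf{Z}_{n,j}$ for indices $n$ in $\mathcal{H}_{\gamma_i}\subseteq \mathcal{G}_{\gamma_i}$. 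I would then take the union $\mathcal{A}\triangleq \bigcup_{\gamma'_i\in\mathcal{F}_{(\gamma_i,\text{child})}} \mathcal{A}_{\gamma'_i}$, so that the first family of random variables is a function of $\{\mathbf{Z}_{m,j}\}_{m\in\mathcal{A}, j\in[T]}$ while the second family is a function of $\{\mathbf{Z}_{n,j}\}_{n\in\mathcal{H}_{\gamma_i}, j\in[T]}$.

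The crux is then the observation that $\mathcal{A}\cap\mathcal{H}_{\gamma_i}=\emptyset$. This follows directly from the grouping step of \swift: the groups $\mathcal{G}_1,\ldots,\mathcal{G}_\Gamma$ form a partition of $[N]$, and since the aggregation tree $\mathscr{T}$ is acyclic and rooted, no descendant of a child group $\gamma'_i$ (including $\gamma'_i$ itself) can coincide with $\gamma_i$. Hence the user indices appearing in $\mathcal{A}$ and in $\mathcal{H}_{\gamma_i}$ are disjoint. Because the elementary masks $\mathbf{Z}_{n,j}$ are chosen i.i.d. uniformly from $\mathbb{F}_p^{L/K}$ across both $n$ and $j$, any two (possibly vector-valued) deterministic functions of disjoint index subsets are mutually independent, giving
\begin{align*}
I\bigl(\{\tilde{\mathbf{Z}}_{(\gamma'_i,t_i)}\}_{\gamma'_i\in\mathcal{F}_{(\gamma_i,\text{child})}};\{{\mathbf{Z}'}_n^{(t_i)}\}_{n\in\mathcal{H}_{\gamma_i}}\bigr)=0.
\end{align*}

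There is no substantial obstacle here; the proof is essentially bookkeeping. The only point that needs a line of care is to confirm that the two edge cases—$\mathcal{F}_{(\gamma_i,\text{child})}=\emptyset$ (for which the left-hand family is empty and the claim is trivial) and $\mathcal{H}_{\gamma_i}=\emptyset$ (for which the right-hand family is empty and the claim is again trivial)—are handled consistently by the convention that a function of an empty index set is a deterministic constant, and that the tree-induced disjointness of ancestor and descendant groups in $\mathscr{T}$ is a consequence of $\mathscr{T}$ being acyclic.
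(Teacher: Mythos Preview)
Your proposal is correct and matches the paper's approach: the paper does not give a detailed proof of this lemma, stating only that it follows from the definitions \eqref{z_tild}, \eqref{z'} and the fact that the random vectors are chosen uniformly and independently. You have simply made explicit the underlying disjointness-of-index-sets argument that the paper leaves implicit.
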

	  
	 \begin{lemma}\label{lemma3}
	 Consider user $(\gamma_i,t_i)$ and user $(\tilde{\gamma}_i,\tilde{t}_i)$. Then, $I\big( \{{\mathbf{Z}'}_n^{(t_i)},n\in\mathcal{H}_{\gamma_i}\}; \{{\mathbf{Z}'}_{{n}}^{(\tilde{t}_i)},{n}\in\mathcal{H}_{\tilde{\gamma}_i}\}\big)=0$ for $\gamma_i\neq \tilde{\gamma}_i$, $t_i,\tilde{t}_i\in[\nu]$.
	 \end{lemma}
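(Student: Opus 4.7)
The plan is to exploit the fact that the grouping in \swift partitions the users, so different groups contain disjoint sets of users, combined with the fact that the local randomness $\mathcal{Z}_n$ of distinct users is drawn independently.

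First, I would observe that since $\mathcal{G}_1,\dots,\mathcal{G}_{\Gamma}$ form a partition of $[N]$, whenever $\gamma_i\neq\tilde{\gamma}_i$ we have $\mathcal{G}_{\gamma_i}\cap\mathcal{G}_{\tilde{\gamma}_i}=\emptyset$, hence also $\mathcal{H}_{\gamma_i}\cap\mathcal{H}_{\tilde{\gamma}_i}=\emptyset$. Next, I would unpack the definition in \eqref{z'}: for every honest user $n$, the vector ${\mathbf{Z}'}_n^{(t)}=\sum_{j=1}^{T}\mathbf{Z}_{n,j}\alpha_{t}^{K+j-1}$ is a deterministic function of $\mathcal{Z}_n=\{\mathbf{Z}_{n,j}\}_{j\in[T]}$ only (the evaluation points $\alpha_t$ are public constants). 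Consequently, the collection $\{{\mathbf{Z}'}_n^{(t_i)}:n\in\mathcal{H}_{\gamma_i}\}$ is a function of $\mathcal{R}_{\gamma_i}\triangleq\{\mathbf{Z}_{n,j}:n\in\mathcal{H}_{\gamma_i},j\in[T]\}$, and similarly $\{{\mathbf{Z}'}_n^{(\tilde{t}_i)}:n\in\mathcal{H}_{\tilde{\gamma}_i}\}$ is a function of $\mathcal{R}_{\tilde{\gamma}_i}\triangleq\{\mathbf{Z}_{n,j}:n\in\mathcal{H}_{\tilde{\gamma}_i},j\in[T]\}$.

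Finally, I would invoke the fact stated in the problem formulation: the entries of $\mathcal{Z}_n$ are chosen independently across users, i.i.d.\ uniformly on $\mathbb{F}_p^{L/K}$. Since $\mathcal{H}_{\gamma_i}$ and $\mathcal{H}_{\tilde{\gamma}_i}$ are disjoint, the random vectors in $\mathcal{R}_{\gamma_i}$ and $\mathcal{R}_{\tilde{\gamma}_i}$ are mutually independent, so $I(\mathcal{R}_{\gamma_i};\mathcal{R}_{\tilde{\gamma}_i})=0$. By the data processing inequality applied to the deterministic maps above, the claimed mutual information also vanishes, which completes the proof.

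There is no substantive obstacle here: the claim follows from partitioning and independent sampling, and the only thing one must be careful about is to state explicitly that $\gamma_i\neq\tilde{\gamma}_i$ forces disjointness of the honest-user index sets (this is where the hypothesis is used), so the choice of evaluation points $t_i$ and $\tilde{t}_i$ is irrelevant to the independence argument.
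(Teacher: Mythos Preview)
Your proposal is correct and follows essentially the same approach as the paper: the paper does not spell out a proof of this lemma but simply attributes it to the definitions \eqref{z_tild}, \eqref{z'} and the fact that the random vectors are chosen uniformly and independently, which is precisely the independence-plus-disjointness argument you give. Your write-up is in fact more explicit than the paper's, since you name the partition property and invoke the data processing inequality rather than leaving these implicit.
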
  

\begin{lemma}\label{lemma4}
 For any $\gamma_i\in[\Gamma]$ consider user $n\in\mathcal{H}_{\gamma_i}$. Then 
 for all $t_i\in[\nu]$, $I\big( {\mathbf{Z}'}_n^{(t_i)};\{{\mathbf{Z}'}_n^{(t)},t\in\mathcal{T}'\}\big)=0$, where $\mathcal{T}'\subset[\nu]\backslash \{t_i\}$, and $|\mathcal{T}'|\le T-1$. Similarly, $I\big( \tilde{\mathbf{Z}}_{(\gamma_i,t_i)};\{\tilde{\mathbf{Z}}_{(\gamma_i,t)},t\in\mathcal{T}'\}\big)=0$.
\end{lemma}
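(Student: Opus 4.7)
The plan is to view both ${\mathbf{Z}'}_n^{(t)}$ and $\tilde{\mathbf{Z}}_{(\gamma_i,t)}$ as evaluations, at the point $x=\alpha_t$, of a single univariate polynomial whose coefficients are i.i.d.\ uniform on $\mathbb{F}_p^{L/K}$, and then to conclude independence from a Vandermonde non-singularity argument. This reduces both assertions of the lemma to the same linear-algebra fact, so I would prove the first part in detail and recycle the argument for the second.

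For the first part, I would set $P_n(x)\triangleq \sum_{j=1}^{T}\mathbf{Z}_{n,j}\,x^{K+j-1}$, so that ${\mathbf{Z}'}_n^{(t)}=P_n(\alpha_t)$. Writing $\mathcal{S}\triangleq \mathcal{T}'\cup\{t_i\}$, which has cardinality at most $T$, stacking the evaluations yields $({\mathbf{Z}'}_n^{(t)})_{t\in\mathcal{S}} = M\,(\mathbf{Z}_{n,1},\ldots,\mathbf{Z}_{n,T})^{\top}$, where $M$ is the $|\mathcal{S}|\times T$ matrix with $(k,j)$ entry $\alpha_{t_k}^{K+j-1}$. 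I would factor $M=\diag(\alpha_{t_k}^{K})\,V$, where $V$ is the Vandermonde matrix on the distinct nodes $\{\alpha_t\}_{t\in\mathcal{S}}$. The diagonal factor is invertible because every $\alpha_t\neq 0$, and $V$ has full row rank because the $\alpha_t$'s are distinct; hence $M$ itself has full row rank. Since $(\mathbf{Z}_{n,1},\ldots,\mathbf{Z}_{n,T})$ is i.i.d.\ uniform on $\mathbb{F}_p^{L/K}$, its image under a full-row-rank linear map is jointly uniform on $\mathbb{F}_p^{|\mathcal{S}|\cdot L/K}$, so the components are mutually independent, which gives $I\big({\mathbf{Z}'}_n^{(t_i)};\{{\mathbf{Z}'}_n^{(t)}\}_{t\in\mathcal{T}'}\big)=0$.

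For the second part, I would rewrite $\tilde{\mathbf{Z}}_{(\gamma_i,t)}=\sum_{j=1}^{T}\alpha_t^{K+j-1}\,\mathbf{Y}_j$, with $\mathbf{Y}_j\triangleq \sum_{\gamma'\in\{\gamma_i\}\cup\mathcal{F}_{(\gamma_i,\text{desc})}}\sum_{m\in\mathcal{H}_{\gamma'}}\mathbf{Z}_{m,j}$. Because the $\mathbf{Z}_{m,j}$'s are independent across both $m$ and $j$, and the index set of the inner double sum does not depend on $j$, the vectors $\mathbf{Y}_1,\ldots,\mathbf{Y}_T$ are i.i.d.\ uniform on $\mathbb{F}_p^{L/K}$ whenever that index set is non-empty. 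The identical Vandermonde argument, now applied with $\mathbf{Y}_j$ in place of $\mathbf{Z}_{n,j}$, then yields the second equality.

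The main obstacle I anticipate is purely a bookkeeping one: verifying that $\mathbf{Y}_j$ is non-degenerate, i.e.\ that the subtree rooted at $\gamma_i$ contains at least one honest, non-dropout user (otherwise $\tilde{\mathbf{Z}}_{(\gamma_i,t_i)}$ is itself trivial and the claim is vacuous). Assuming that boundary case is set aside, both statements collapse to the single observation that a sub-Vandermonde matrix with distinct, non-zero nodes and at most as many rows as columns has full row rank; no heavier information-theoretic machinery is required.
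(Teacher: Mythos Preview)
Your proposal is correct and is precisely the explicit Vandermonde argument that underlies the ramp secret-sharing privacy the paper invokes; the paper itself does not spell out a proof of Lemma~\ref{lemma4}, merely stating that it follows from the ramp secret-sharing scheme and the uniform independence of the $\mathbf{Z}_{n,j}$'s. Your treatment is the standard unpacking of that claim, including the appropriate handling of the degenerate case where the honest, non-dropout index set is empty.
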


Let us define $\mathcal{W}_{N\backslash\mathcal{T}}\triangleq\{\mathbf{W}_n,{n\in[N]}\backslash{\mathcal{T}}\}$,  $\mathcal{K}_{N,\mathcal{T}}\triangleq\{ \{\mathbf{W}_k,\mathcal{Z}_k,{k\in\mathcal{T}}\},\sum_{n\in[N]\backslash\{\mathcal{D}\cup\mathcal{T}\}}{\mathbf{W}_n}\}$, and $\mathcal{M}_{\mathcal{T}}~\triangleq~\bigcup_{i\in[T]}\mathcal{M}_{\tilde{U}_i}$. In addition, $\mathcal{S}_{\Gamma}~\triangleq~\{\mathbf{S}_{(\Gamma,t)},{t\in[\nu]\backslash\mathcal{D}}\}$ represents the set of messages that the server receives from users in group $\Gamma$.
According to privacy constraint in \eqref{serve-prv}, we have to show that
	\begin{align}\label{25}
I\big(\mathcal{W}_{N\backslash\mathcal{T}};\mathcal{M}_{\mathcal{T}},\mathcal{S}_{\Gamma}\big|\hspace{-1mm}\sum\limits_{n\in[N]\backslash\{\mathcal{D}\cup\mathcal{T}\}}\hspace{-6mm}{\mathbf{W}_n},\{\mathbf{W}_k,\mathcal{Z}_k,{k\in\mathcal{T}}\} \big)=0.
		\end{align}
	\begin{lemma}\label{lemma7}
	Let $(\gamma_1,t_1),(\gamma_2,t_2),\dots,(\gamma_T,t_T)$ be $T$ semi-honest users, where $\gamma_i\in[\Gamma]$ and $t_i\in[\nu]$ for $i\in[T]$. Then, for $i\in[T]$ we have
$I\big(\mathcal{W}_{N\backslash\mathcal{T}};\mathcal{M}_{\tilde{U}_i}\big|\mathcal{K}_{N,\mathcal{T}},\{\mathcal{M}_{\tilde{U}_j},j\in\mathcal{J}_i\} \big)=0,$
where $\mathcal{J}_i=\{j: \gamma_j\in\mathcal{F}_{(\gamma_i,\text{desc})}, \text{ for } j\in[T]\}$.
\begin{proof}
	    Consider the semi-honest user in group $\gamma_i$, $i\in[T]$. If $\mathcal{J}_i=\emptyset$, then we have 
 	    \begin{align*}
	        I&\big(\mathcal{W}_{N\backslash\mathcal{T}};\mathcal{M}_{\tilde{U}_i}\big|\mathcal{K}_{N,\mathcal{T}} \big)\\ =&I\big(\mathcal{W}_{N\backslash\mathcal{T}};\{\mathbf{F}_{(n,t_i)}, n\in\mathcal{H}_{\gamma_i}\},\{\mathbf{S}_{(\gamma'_i,t_i)},\gamma'_i\in\mathcal{F}_{(\gamma_i,\text{child})}\}\big|\mathcal{K}_{N,\mathcal{T}} \big)\\
	        =&H\big(\{\mathbf{F}_{(n,t_i)}, n\in\mathcal{H}_{\gamma_i}\}, \{\mathbf{S}_{(\gamma'_i,t_i)},\gamma'_i\in\mathcal{F}_{(\gamma_i,\text{child})}\}\big|\mathcal{K}_{N,\mathcal{T}} \big)\\
	        &-H\big(\{\mathbf{F}_{(n,t_i)}, n\in\mathcal{H}_{\gamma_i}\},\{\mathbf{S}_{(\gamma'_i,t_i)},\gamma'_i\in\mathcal{F}_{(\gamma_i,\text{child})}\}\big|\mathcal{K}_{N,\mathcal{T}},\mathcal{W}_N\big)\\
	        \leq &H\big(\{\mathbf{F}_{(n,t_i)}, n\in\mathcal{H}_{\gamma_i}\},\{\mathbf{S}_{(\gamma'_i,t_i)},\gamma'_i\in\mathcal{F}_{(\gamma_i,\text{child})}\}\big)\\
	   & -H\big(\{{\mathbf{Z}'}_n^{(t_i)},n\in\mathcal{H}_{\gamma_i}\},\{\tilde{\mathbf{Z}}_{(\gamma'_i,t_i)},\gamma'_i\in\mathcal{F}_{(\gamma_i,\text{child})}\}\big)\le 0.
	    \end{align*}
	    The last term follows from the fact that both terms 
	    have the same size and uniform variables maximize the entropy.  Therefore, $I\big(\mathcal{W}_{N\backslash\mathcal{T}};\mathcal{M}_{\tilde{U}_i}\big|\mathcal{K}_{N,\mathcal{T}} \big)=0$.

 \noindent If $\mathcal{J}_i\ne\emptyset$, then we have 
	    \begin{align}\nonumber
	        I\big(&\mathcal{W}_{N\backslash\mathcal{T}};\mathcal{M}_{\tilde{U}_i}\big|\mathcal{K}_{N,\mathcal{T}},\{\mathcal{M}_{\tilde{U}_j},j\in\mathcal{J}_i\} \big)\\\nonumber
	        =I\bigg(&\mathcal{W}_{N\backslash\mathcal{T}};\{\mathbf{F}_{(n,t_i)}, n\in\mathcal{H}_{\gamma_i}\},\{\mathbf{S}_{(\gamma'_i,t_i)},\gamma'_i\in\mathcal{F}_{(\gamma_i,\text{child})}\}\bigg|
	        \mathcal{K}_{N,\mathcal{T}},\\\nonumber
	        &\big\{\{\mathbf{F}_{(n,t_{j})},n\in\mathcal{H}_{\gamma_j}\},\{\mathbf{S}_{(\gamma'_i,t_j)},\gamma'_j\in\mathcal{F}_{(\gamma_j,\text{child})}\},j\in\mathcal{J}_i\big\} \bigg)\\\nonumber
	        =I\bigg(&\mathcal{W}_{N\backslash\mathcal{T}};\{\mathbf{F}_{(n,t_i)}, n\in\mathcal{H}_{\gamma_i}\}\bigg|
	        \mathcal{K}_{N,\mathcal{T}},
	         \big\{\{\mathbf{F}_{(n,t_{j})},n\in\mathcal{H}_{\gamma_j}\},\\\nonumber
	         &\{\mathbf{S}_{(\gamma'_i,t_j)},\gamma'_j\in\mathcal{F}_{(\gamma_j,\text{child})}\},j\in\mathcal{J}_i\big\}  \bigg)\\\nonumber
	        +I\bigg(&\mathcal{W}_{N\backslash\mathcal{T}};\{\mathbf{S}_{(\gamma'_i,t_i)},\gamma'_i\in\mathcal{F}_{(\gamma_i,\text{child})}\}\bigg|
	        \mathcal{K}_{N,\mathcal{T}},\{\mathbf{F}_{(n,t_i)}, n\in\mathcal{H}_{\gamma_i}\},\\\label{eq_I}
	        & \big\{\{\mathbf{F}_{(n,t_{j})},n\in\mathcal{H}_{\gamma_j}\},\{\mathbf{S}_{(\gamma'_i,t_j)},\gamma'_j\in\mathcal{F}_{(\gamma_j,\text{child})}\},j\in\mathcal{J}_i\big\} \bigg).
	        \end{align}
	        Using the definition of mutual information, \eqref{eq_I} can be written as 
	         \begin{align}\nonumber\label{eq18}
	       H\bigg(&\{\mathbf{F}_{(n,t_i)}, n\in\mathcal{H}_{\gamma_i}\}\bigg|
	        \mathcal{K}_{N,\mathcal{T}},
	         \big\{\{\mathbf{F}_{(n,t_{j})},n\in\mathcal{H}_{\gamma_j}\},
	         \\\nonumber
	         &\{\mathbf{S}_{(\gamma'_i,t_j)},\gamma'_j\in\mathcal{F}_{(\gamma_j,\text{child})}\},j\in\mathcal{J}_i\big\}\bigg)\\\nonumber
	        - H\bigg(&\{\mathbf{F}_{(n,t_i)}, n\in\mathcal{H}_{\gamma_i}\}\bigg|
	        \mathcal{W}_{N\backslash\mathcal{T}},\mathcal{K}_{N,\mathcal{T}},
	         \big\{\{\mathbf{F}_{(n,t_{j})},n\in\mathcal{H}_{\gamma_j}\},\\\nonumber
	         &\{\mathbf{S}_{(\gamma'_i,t_j)},\gamma'_j\in\mathcal{F}_{(\gamma_j,\text{child})}\},j\in\mathcal{J}_i\big\}\bigg)\\\nonumber
	        +H\bigg(&\{\mathbf{S}_{(\gamma'_i,t_i)},\gamma'_i\in\mathcal{F}_{(\gamma_i,\text{child})}\}\bigg|
	        \mathcal{K}_{N,\mathcal{T}},\{\mathbf{F}_{(n,t_i)}, n\in\mathcal{H}_{\gamma_i}\},\\\nonumber
	         &\big\{\{\mathbf{F}_{(n,t_{j})},n\in\mathcal{H}_{\gamma_j}\},\{\mathbf{S}_{(\gamma'_i,t_j)},\gamma'_j\in\mathcal{F}_{(\gamma_j,\text{child})}\},j\in\mathcal{J}_i\big\}\bigg)\\\nonumber
	        -H\bigg(&\{\mathbf{S}_{(\gamma'_i,t_i)},\gamma'_i\in\mathcal{F}_{(\gamma_i,\text{child})}\}\bigg|
	        \mathcal{W}_{N\backslash\mathcal{T}},\mathcal{K}_{N,\mathcal{T}},\{\mathbf{F}_{(n,t_i)}, n\in\mathcal{H}_{\gamma_i}\},\\\nonumber
	         &\big\{\{\mathbf{F}_{(n,t_{j})},n\in\mathcal{H}_{\gamma_j}\},\{\mathbf{S}_{(\gamma'_i,t_j)},\gamma'_j\in\mathcal{F}_{(\gamma_j,\text{child})}\},j\in\mathcal{J}_i\big\}\bigg)\\\nonumber
	        \stackrel{\text{(a)}}{\leq }H\bigg(&\{\mathbf{F}_{(n,t_i)}, n\in\mathcal{H}_{\gamma_i}\}\bigg)
	       +H\bigg(\{\mathbf{S}_{(\gamma'_i,t_i)},\gamma'_i\in\mathcal{F}_{(\gamma_i,\text{child})}\}\bigg)\\\nonumber
	       	-H\bigg(&\{{\mathbf{Z}'}_n^{(t_i)}, n\in\mathcal{H}_{\gamma_i}\}\bigg|
	        \mathcal{W}_{N},\mathcal{K}_{N,\mathcal{T}},
	         \big\{\{{\mathbf{Z}'}_n^{(t_j)},n\in\mathcal{H}_{\gamma_j}\},\\\nonumber
	         &\{\tilde{\mathbf{Z}}_{(\gamma'_i,t_j)},\gamma'_j\in\mathcal{F}_{(\gamma_j,\text{child})}\},j\in\mathcal{J}_i\big\}\bigg)\\\nonumber
	       -H\bigg(&\{\tilde{\mathbf{Z}}_{(\gamma'_i,t_i)},\gamma'_i\in\mathcal{F}_{(\gamma_i,\text{child})}\}\bigg|
	        \mathcal{W}_{N},\mathcal{K}_{N,\mathcal{T}},\{{\mathbf{Z}'}_n^{(t_i)}, n\in\mathcal{H}_{\gamma_i}\},\\
	        & \big\{\{{\mathbf{Z}'}_n^{(t_{j})},n\in\mathcal{H}_{\gamma_j}\},\{\tilde{\mathbf{Z}}_{(\gamma'_i,t_j)},\gamma'_j\in\mathcal{F}_{(\gamma_j,\text{child})}\},j\in\mathcal{J}_i\big\}\bigg),
	         \end{align}
	          where in (a) the first and the second terms follow from the fact that $H(X|Y)\le H(X)$. 
	          
	          Now we show that 
	 \begin{align}\nonumber\label{eq19}
	            I\bigg(&\{{\mathbf{Z}'}_n^{(t_i)}, n\in\mathcal{H}_{\gamma_i}\};
	        \mathcal{W}_{N},\mathcal{K}_{N,\mathcal{T}},
	         \big\{\{{\mathbf{Z}'}_n^{(t_j)},n\in\mathcal{H}_{\gamma_j}\},\\
	         &\{\tilde{\mathbf{Z}}_{(\gamma'_i,t_j)},\gamma'_j\in\mathcal{F}_{(\gamma_j,\text{child})}\},j\in\mathcal{J}_i\big\}\bigg)=0.
	         \end{align}
	         From definition of mutual information, we have
	         \begin{align*}
	              I\bigg(&\{{\mathbf{Z}'}_n^{(t_i)}, n\in\mathcal{H}_{\gamma_i}\};
	        \mathcal{W}_{N},\mathcal{K}_{N,\mathcal{T}},
	         \big\{\{{\mathbf{Z}'}_n^{(t_j)},n\in\mathcal{H}_{\gamma_j}\},\\
	         &\{\tilde{\mathbf{Z}}_{(\gamma'_i,t_j)},\gamma'_j\in\mathcal{F}_{(\gamma_j,\text{child})}\},j\in\mathcal{J}_i\big\}\bigg)\\
	         =H\bigg(&  \mathcal{W}_{N},\mathcal{K}_{N,\mathcal{T}},
	         \big\{\{{\mathbf{Z}'}_n^{(t_j)},n\in\mathcal{H}_{\gamma_j}\},
	         \{\tilde{\mathbf{Z}}_{(\gamma'_i,t_j)},\gamma'_j\in\mathcal{F}_{(\gamma_j,\text{child})}\},\\
	         &j\in\mathcal{J}_i\big\}\bigg)-H\bigg(  \mathcal{W}_{N},\mathcal{K}_{N,\mathcal{T}},
	         \big\{\{{\mathbf{Z}'}_n^{(t_j)},n\in\mathcal{H}_{\gamma_j}\},\\
	         &\{\tilde{\mathbf{Z}}_{(\gamma'_i,t_j)},\gamma'_j\in\mathcal{F}_{(\gamma_j,\text{child})}\},j\in\mathcal{J}_i\big\}\bigg|\{{\mathbf{Z}'}_n^{(t_i)}, n\in\mathcal{H}_{\gamma_i}\} \bigg)=0,
	         \end{align*}
	         where the last equality holds due to Lemma \ref{lemma2}, Lemma~\ref{lemma3}, Lemma~\ref{lemma4}, and the independence of local models and random vectors.
	           \begin{lemma}\label{lemma6}
	        For $\gamma_i\in[\Gamma]$, consider ${{\mathcal{Z}}'}_{{{\tilde{\mathcal{H}}}_i}}=\{{\mathbf{Z}'}_n^{(t)},n\in{\tilde{\mathcal{H}}}_i,t\in\tilde{\mathcal{T}}\}$,  where ${\tilde{\mathcal{H}}}_i$ is a subset of $\{\mathcal{H}_j,j\in\mathcal{F}_{(\gamma_i,\text{desc})}\}$ of size up to $T-1$, and $|\tilde{\mathcal{T}}|\le T-1$. Then, $I\big({\tilde{\mathbf{Z}}}_{(\gamma_i,t)};{{\mathcal{Z}}'}_{{{\tilde{\mathcal{H}}}_i}}\big)=0$.
	        \begin{proof}
	   We can consider two cases: (I) If there is at least one group like $\gamma'_i\in\mathcal{F}_{(\gamma_i,\text{desc})}$ such that $\mathcal{H}_{\gamma'_i}\notin{\tilde{\mathcal{H}}}_i$ then we can conclude that
	      $I\big({\tilde{\mathbf{Z}}}_{(\gamma_i,t)};{{\mathcal{Z}}'}_{{{\tilde{\mathcal{H}}}_i}}\big)=0$.
	     The reason is that there is a non-empty set of honest and non-dropped users that  ${\tilde{\mathbf{Z}}}_{(\gamma_i,t)}$  includes a summation of their i.i.d. random vectors, (II) If ${\tilde{\mathcal{H}}}_i=\{\mathcal{H}_j,j\in\mathcal{F}_{(\gamma_i,\text{desc})}\}$,  then based on ramp secret sharing we have  $I\big({\tilde{\mathbf{Z}}}_{(\gamma_i,t)};{{\mathcal{Z}}'}_{{{\tilde{\mathcal{H}}}_i}}\big)=0$.
	        \end{proof}
	    \end{lemma}
	         
	        Similar to \eqref{eq19}, using Lemma~\ref{lemma1}, Lemma~\ref{lemma2}, Lemma~\ref{lemma4}, and  Lemma~\ref{lemma6} we can proof 
	         \begin{align}\nonumber\label{eq20}
	       I\bigg(&\{\tilde{\mathbf{Z}}_{(\gamma'_i,t_i)},\gamma'_i\in\mathcal{F}_{(\gamma_i,\text{child})}\};
	        \mathcal{W}_{N},\mathcal{K}_{N,\mathcal{T}},\{{\mathbf{Z}'}_n^{(t_i)}, n\in\mathcal{H}_{\gamma_i}\},\\
	        & \big\{\{{\mathbf{Z}'}_n^{(t_{j})},n\in\mathcal{H}_{\gamma_j}\},\{\tilde{\mathbf{Z}}_{(\gamma'_i,t_j)},\gamma'_j\in\mathcal{F}_{(\gamma_j,\text{child})}\},j\in\mathcal{J}_i\big\}\bigg)=0.
	         \end{align}
	    Using \eqref{eq19} and \eqref{eq20}, \eqref{eq18} can be written as 
	         \begin{align*}
	        H\big(&\{\mathbf{F}_{(n,t_i)}, n\in\mathcal{H}_{\gamma_i}\}\big)
	         + H\big(\{\mathbf{S}_{(\gamma'_i,t_i)},\gamma'_i\in\mathcal{F}_{(\gamma_i,\text{child})}\}\big)\\
	         -H\big(&\{{\mathbf{Z}'}_n^{(t_i)}, n\in\mathcal{H}_{\gamma_i}\}\big)
	        -H\big(\{\tilde{\mathbf{Z}}_{(\gamma'_i,t_i)},\gamma'_i\in\mathcal{F}_{(\gamma_i,\text{child})}\}\big)\le 0,
	        \end{align*}
	        where the last term follows from the fact that uniform variables maximize entropy. 
	        
	        Thus, $	I\big(\mathcal{W}_{N\backslash\mathcal{T}};\mathcal{M}_{\tilde{U}_i}\big|\mathcal{K}_{N,\mathcal{T}},\{\mathcal{M}_{\tilde{U}_j},j\in\mathcal{J}_i\} \big)=0$, where
	        $\mathcal{J}_i=\{j: \gamma_j\in\mathcal{F}_{(\gamma_i,\text{desc})}, \text{ for } j\in[T]\}$. 
	\end{proof}
	\end{lemma}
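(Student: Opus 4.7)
The plan is to establish the claim by splitting the view $\mathcal{M}_{\tilde{U}_i}$ of semi-honest user $(\gamma_i,t_i)$ into two pieces via the chain rule for mutual information: the intra-group shares $\{\mathbf{F}_{(n,t_i)}: n \in \mathcal{H}_{\gamma_i}\}$ received from honest in-group neighbors, and the inter-group aggregates $\{\mathbf{S}_{(\gamma'_i,t_i)}: \gamma'_i \in \mathcal{F}_{(\gamma_i,\text{child})}\}$ received from its children groups. I would then expand the conditional mutual information for each piece as a difference of two conditional entropies and bound them separately. The easy case $\mathcal{J}_i=\emptyset$ (no semi-honest descendants) serves as a warm-up where only the basic uniform-masking argument is needed.

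For the upper-bound term of each entropy difference, I would drop conditioning to obtain $H(\{\mathbf{F}_{(n,t_i)}\}) + H(\{\mathbf{S}_{(\gamma'_i,t_i)}\})$, which is bounded above by the uniform entropy since each message lives in $\mathbb{F}_p^{L/K}$. For the subtracted term, I would additionally condition on all local models $\mathcal{W}_N$; by the definitions of $\mathbf{F}_{(n,t_i)}$ and $\mathbf{S}_{(\gamma'_i,t_i)}$, this collapses each message to its random-mask component, namely ${\mathbf{Z}'}_n^{(t_i)}$ or $\tilde{\mathbf{Z}}_{(\gamma'_i,t_i)}$. It then suffices to show that these mask variables retain full (uniform) entropy even after conditioning on $\mathcal{K}_{N,\mathcal{T}}$ and the descendant semi-honest views $\{\mathcal{M}_{\tilde{U}_j}: j\in\mathcal{J}_i\}$; if so, the two entropies coincide and the mutual information is forced to $\le 0$, hence $=0$.

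The crux is the independence of those masks from everything in the conditioning. I would invoke the lemmas already proved as follows. Lemma~\ref{lemma2} decouples the in-group noises $\{{\mathbf{Z}'}_n^{(t_i)}: n \in \mathcal{H}_{\gamma_i}\}$ from the child-group aggregate noises $\{\tilde{\mathbf{Z}}_{(\gamma'_i,t_i)}\}$. Lemma~\ref{lemma3} together with Lemma~\ref{lemma4} ensures that noises at coordinate $t_i$ in group $\gamma_i$ are independent of noises at other coordinates $t_j$ in other groups, so the descendant views do not touch them. For the child-group aggregates $\tilde{\mathbf{Z}}_{(\gamma'_i,t_i)}$, Lemma~\ref{lemma1} provides independence across the parent/child chain at the same coordinate $t_i$, while Lemma~\ref{lemma6} controls what can be learned about $\tilde{\mathbf{Z}}_{(\gamma'_i,t_i)}$ from the noises at other coordinates exposed through $\{\mathcal{M}_{\tilde{U}_j}: j\in\mathcal{J}_i\}$.

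The main obstacle will be the bookkeeping for the descendant contributions. Because $\tilde{U}_i$ itself is one of the $T$ semi-honest users, the set $\mathcal{J}_i$ has size at most $T-1$, so the number of ``other'' evaluation coordinates $\{t_j:j\in\mathcal{J}_i\}$ revealed through the descendant semi-honest views is at most $T-1$; this matches exactly the privacy threshold under which Lemma~\ref{lemma6} applies, and hence preserves the uniformity of $\tilde{\mathbf{Z}}_{(\gamma'_i,t_i)}$. Making this count rigorous when several semi-honest users sit in the same descendant group, and stitching the two joint-entropy expansions into a single inequality of the form $H(\cdot)+H(\cdot)-H(\text{noise})-H(\text{noise})\le 0$, is the delicate part; once that is done, summing intra- and inter-group contributions gives the desired $I(\mathcal{W}_{N\setminus\mathcal{T}};\mathcal{M}_{\tilde{U}_i}\mid\mathcal{K}_{N,\mathcal{T}},\{\mathcal{M}_{\tilde{U}_j},j\in\mathcal{J}_i\})=0$.
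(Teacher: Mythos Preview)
Your proposal is correct and follows essentially the same approach as the paper: the same case split on $\mathcal{J}_i=\emptyset$ versus $\mathcal{J}_i\neq\emptyset$, the same chain-rule decomposition of $\mathcal{M}_{\tilde U_i}$ into intra-group shares and inter-group aggregates, the same drop-conditioning / condition-on-$\mathcal{W}_N$ maneuver to reduce to noise terms, and the same invocations of Lemmas~\ref{lemma1}--\ref{lemma4} and~\ref{lemma6} (together with the count $|\mathcal{J}_i|\le T-1$) to certify that the noise terms retain full uniform entropy. The paper's write-up differs only in the order it presents the entropy difference as equations \eqref{eq18}--\eqref{eq20}; your outline would produce a nearly identical argument.
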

	According to Lemma~\ref{lemma7}, \eqref{25} can be written as follows.
	\begin{align}\nonumber
	    I\big(&\mathcal{W}_{N\backslash\mathcal{T}};\mathcal{M}_{\mathcal{T}},\mathcal{S}_{\Gamma}\big|\mathcal{K}_{N,\mathcal{T}} \big)= I\big(\mathcal{W}_{N\backslash\mathcal{T}}; \mathcal{S}_{\Gamma}\big|\mathcal{K}_{N,\mathcal{T}}, \mathcal{M}_{\mathcal{T}}\big)\\\nonumber
	    &+\sum_{i=1}^T
	    I\big(\mathcal{W}_{N\backslash\mathcal{T}};\mathcal{M}_{\tilde{U}_i}\big|\mathcal{K}_{N,\mathcal{T}},\{\mathcal{M}_{\tilde{U}_j},j\in\mathcal{J}_i\} \big)\\\nonumber
	    =&I\big(\mathcal{W}_{N\backslash\mathcal{T}}; \{\tilde{\mathbf{Z}}_{(\Gamma,t')},t'\in[\nu]\backslash\{\mathcal{T}\cup\mathcal{D}\}\}\big|\mathcal{K}_{N,\mathcal{T}}, \mathcal{M}_{\mathcal{T}}\big)\nonumber\\ \label{28}
	    =&H\big(\{\tilde{\mathbf{Z}}_{(\Gamma,t')},t'\in[\nu]\backslash\{\mathcal{T}\cup\mathcal{D}\}\}\big|\mathcal{K}_{N,\mathcal{T}}, \mathcal{M}_{\mathcal{T}}\big)\\\nonumber
	    &-H\big(\{\tilde{\mathbf{Z}}_{(\Gamma,t')},t'\in[\nu]\backslash\{\mathcal{T}\cup\mathcal{D}\}\}\big|\mathcal{K}_{N,\mathcal{T}}, \mathcal{R}_{\mathcal{T}},\mathcal{W}_{N}\big)=0,
	\end{align}
	where  $\mathcal{R}_{\mathcal{T}}\triangleq\bigcup_{i\in[T]}\mathcal{R}_{\tilde{U}_i}$, and 
  $\mathcal{R}_{\tilde{U}_i}~\triangleq~\big\{\{{\mathbf{Z}'}_n^{(t_i)},n\in\mathcal{H}_{\gamma_i}\},\{\tilde{\mathbf{Z}}_{(\gamma'_i,t_i)},\gamma'_i\in\mathcal{F}_{(\gamma_i,\text{child})}\}\big\}$. 
Using argument similar to that in the proof of Lemma~\ref{lemma7}, independence of local models and random vectors, and according to Lemma~\ref{lemma1}, Lemma~\ref{lemma4} and Lemma~\ref{lemma6}, both terms in \eqref{28} are equal to $H\big(\{\tilde{\mathbf{Z}}_{(\Gamma,t')},t'\in[\nu]\backslash\{\mathcal{T}\cup\mathcal{D}\}\}\big)$ and the result is 0. Therefore, the privacy constraint is satisfied, i.e.,
\begin{align*}
    I\big(\mathcal{W}_{N\backslash\mathcal{T}};\mathcal{M}_{\mathcal{T}},\mathcal{S}_{\Gamma},\{\mathbf{W}_k,\mathcal{Z}_k,{k\in\mathcal{T}}\}\big|\sum_{n\in[N]\backslash\{\mathcal{D}\cup\mathcal{T}\}}{\mathbf{W}_n} \big)=0.
\end{align*}
	\section{conclusion}
In this paper, we propose \swift, a secure aggregation protocol for model aggregation in federated learning, which acheives a trade-off between the communication loads and network connections. Via partitioning the users into groups,  careful designs of intra and inter group secret sharing and aggregation method, \swift is able to achieve correct aggregation in presence of up to $D$ dropout users, with the worst-case security guarantee against up to $T$ users colluding with a curious server.
Moreover, \swift has a flexibility to control the delay of each training iteration by choosing the aggregation tree with different depth. 
Compared with previous secure aggregation protocols, \swift significantly slashes the communication load. 
 For $T=o(N)$ and $N-D=\mathcal{O}(N)$, in terms of the number of required bits, the server communication load in \texttt{SwiftAgg+} is within factor 1 of the cut-set lower bound.
In addition,    in the case of local models with uniform distributions, the per-user communication load in  \texttt{SwiftAgg+} is within factor $\log_{\ell}{\ell N} $ of the cut-set lower bound as long as $T=o(N)$ and $D=o(N)$.

	\bibliographystyle{ieeetr}
	\bibliography{References}

\begin{thebibliography}{10}

\bibitem{mcmahan2017communication}
B.~McMahan, E.~Moore, D.~Ramage, S.~Hampson, and B.~A. y~Arcas,
  ``Communication-efficient learning of deep networks from decentralized
  data,'' in {\em Artificial intelligence and statistics}, pp.~1273--1282,
  PMLR, 2017.

\bibitem{kairouz2019advances}
P.~Kairouz, H.~B. McMahan, B.~Avent, A.~Bellet, M.~Bennis, A.~N. Bhagoji,
  K.~Bonawitz, Z.~Charles, G.~Cormode, R.~Cummings, {\em et~al.}, ``Advances
  and open problems in federated learning,'' {\em arXiv preprint
  arXiv:1912.04977}, 2019.

\bibitem{li2020federated}
T.~Li, A.~K. Sahu, A.~Talwalkar, and V.~Smith, ``Federated learning:
  Challenges, methods, and future directions,'' {\em IEEE Signal Processing
  Magazine}, vol.~37, no.~3, pp.~50--60, 2020.

\bibitem{zhu2020deep}
L.~Zhu and S.~Han, ``Deep leakage from gradients,'' in {\em Federated
  learning}, pp.~17--31, Springer, 2020.

\bibitem{geiping2020inverting}
J.~Geiping, H.~Bauermeister, H.~Dr{\"o}ge, and M.~Moeller, ``Inverting
  gradients--how easy is it to break privacy in federated learning?,'' {\em
  arXiv preprint arXiv:2003.14053}, 2020.

\bibitem{bonawitz2017practical}
K.~Bonawitz, V.~Ivanov, B.~Kreuter, A.~Marcedone, H.~B. McMahan, S.~Patel,
  D.~Ramage, A.~Segal, and K.~Seth, ``Practical secure aggregation for
  privacy-preserving machine learning,'' in {\em proceedings of the 2017 ACM
  SIGSAC Conference on Computer and Communications Security}, pp.~1175--1191,
  2017.

\bibitem{so2021turbo}
J.~So, B.~G{\"u}ler, and A.~S. Avestimehr, ``Turbo-aggregate: Breaking the
  quadratic aggregation barrier in secure federated learning,'' {\em IEEE
  Journal on Selected Areas in Information Theory}, vol.~2, no.~1,
  pp.~479--489, 2021.

\bibitem{bell2020secure}
J.~H. Bell, K.~A. Bonawitz, A.~Gasc{\'o}n, T.~Lepoint, and M.~Raykova, ``Secure
  single-server aggregation with (poly) logarithmic overhead,'' in {\em
  Proceedings of the 2020 ACM SIGSAC Conference on Computer and Communications
  Security}, pp.~1253--1269, 2020.

\bibitem{choi2020communication}
B.~Choi, J.-y. Sohn, D.-J. Han, and J.~Moon, ``Communication-computation
  efficient secure aggregation for federated learning,'' {\em arXiv preprint
  arXiv:2012.05433}, 2020.

\bibitem{yu2019lagrange}
Q.~Yu, S.~Li, N.~Raviv, S.~M.~M. Kalan, M.~Soltanolkotabi, and S.~A.
  Avestimehr, ``Lagrange coded computing: Optimal design for resiliency,
  security, and privacy,'' in {\em The 22nd International Conference on
  Artificial Intelligence and Statistics}, pp.~1215--1225, PMLR, 2019.

\bibitem{kadhe2020fastsecagg}
S.~Kadhe, N.~Rajaraman, O.~O. Koyluoglu, and K.~Ramchandran, ``Fastsecagg:
  Scalable secure aggregation for privacy-preserving federated learning,'' {\em
  arXiv preprint arXiv:2009.11248}, 2020.

\bibitem{yang2021lightsecagg}
J.~So, C.~J. Nolet, C.-S. Yang, S.~Li, Q.~Yu, R.~E~Ali, B.~Guler, and
  S.~Avestimehr, ``Lightsecagg: a lightweight and versatile design for secure
  aggregation in federated learning,'' {\em Proceedings of Machine Learning and
  Systems}, vol.~4, pp.~694--720, 2022.

\bibitem{zhao2021information}
Y.~Zhao and H.~Sun, ``Information theoretic secure aggregation with user
  dropouts,'' in {\em 2021 IEEE International Symposium on Information Theory
  (ISIT)}, pp.~1124--1129, IEEE, 2021.

\bibitem{jahani2022swiftagg}
T.~Jahani-Nezhad, M.~A. Maddah-Ali, S.~Li, and G.~Caire, ``Swiftagg:
  Communication-efficient and dropout-resistant secure aggregation for
  federated learning with worst-case security guarantees,'' in {\em 2022 IEEE
  International Symposium on Information Theory (ISIT)}, pp.~103--108, 2022.

\bibitem{shamir1979share}
A.~Shamir, ``How to share a secret,'' {\em Communications of the ACM}, vol.~22,
  no.~11, pp.~612--613, 1979.

\bibitem{schlegel2021codedpaddedfl}
R.~Schlegel, S.~Kumar, E.~Rosnes, {\em et~al.}, ``Codedpaddedfl and
  codedsecagg: Straggler mitigation and secure aggregation in federated
  learning,'' {\em arXiv preprint arXiv:2112.08909}, 2021.

\bibitem{blakley1984security}
G.~R. Blakley and C.~Meadows, {\em Security of ramp schemes}, pp.~242--268.
\newblock 1984.

\bibitem{kedlaya2011fast}
K.~S. Kedlaya and C.~Umans, ``Fast polynomial factorization and modular
  composition,'' {\em SIAM Journal on Computing}, vol.~40, no.~6,
  pp.~1767--1802, 2011.

\bibitem{bonawitz2016practical}
K.~Bonawitz, V.~Ivanov, B.~Kreuter, A.~Marcedone, H.~B. McMahan, S.~Patel,
  D.~Ramage, A.~Segal, and K.~Seth, ``Practical secure aggregation for
  federated learning on user-held data,'' {\em arXiv preprint
  arXiv:1611.04482}, 2016.

\bibitem{aigner1999proofs}
M.~Aigner and G.~M. Ziegler, ``Proofs from the book,'' {\em Berlin. Germany},
  vol.~1, 1999.

\end{thebibliography}


 




\vfill

\end{document}